\title{Using Hoare Logic in a Process Algebra Setting}
\author{J.A. Bergstra \and C.A. Middelburg}
\institute{Informatics Institute, Faculty of Science, University of
           Amsterdam \\
           Science Park~904, 1098~XH Amsterdam, the Netherlands \\
           \email{J.A.Bergstra@uva.nl,C.A.Middelburg@uva.nl}}
\begin{document}
\maketitle

\begin{abstract}
This paper concerns the relation between process algebra and Hoare 
logic.
We investigate the question whether and how a Hoare logic can be used 
for reasoning about how data change in the course of a process when 
reasoning equationally about that process.
We introduce an extension of \ACP\ (Algebra of Communicating Processes) 
with features that are relevant to processes in which data are involved, 
present a Hoare logic for the processes considered in this process 
algebra, and discuss the use of this Hoare logic as a complement to pure 
equational reasoning with the equational axioms of the process algebra.
\begin{keywords}
process algebra, data parameterized action, assignment action, 
guarded command, asserted process, Hoare logic.
\end{keywords}
\begin{classcode}
D.1.3, D.1.4, D.2.4, F.1.2, F.3.1.
\end{classcode}
\end{abstract}

\section{Introduction}
\label{sect-intro}

\ACP\ (Algebra of Communicating Processes) and its extensions provide a 
setting for equational reasoning about processes of some kind.  
The processes about which reasoning is in demand are often processes in 
which data are involved.
It is quite common for such a process that the data that are involved 
change in the course of the process and that the process proceeds at 
certain stages in a way that depends on the changing data.
This means that reasoning about a process often involves reasoning about
how data change in the course of that process.
The question arises whether and how a Hoare logic can be used for the 
second kind of reasoning when reasoning equationally about a process.
After all, processes of the kind described above are reminiscent of the 
processes that arise from the execution of imperative programs.

This paper is concerned with the above-mentioned question.
We investigate it using an extension of \ACP~\cite{BK84b} with features 
that are relevant to processes in which data are involved and a Hoare 
logic of asserted processes based on this extension of \ACP.
The extension concerned is called \deACPei.
Its additional features include assignment actions to deal with data 
that change in the course of a process and guarded commands to deal with 
processes that proceed at certain stages in a way that depends on 
certain data.
In the Hoare logic concerned, an asserted process is a formula of the 
form $\assproc{\phi}{p}{\psi}$, where $p$ is a term of \deACPei\ that 
denotes a process and $\phi$ and $\psi$ are terms of \deACPei\ that 
denote \linebreak[2] conditions.

We define what it means that an asserted process is true in such a way 
that $\assproc{\phi}{p}{\psi}$ is true iff a set of equations that 
represents this judgment is derivable from the axioms of 
\mbox{\deACPei}. 
Such a definition is a prerequisite for an affirmative answer to the 
question whether and how a Hoare logic can be used for reasoning about
how data change in the course of a process when reasoning equationally 
about that process.
The set of equations that represents the judgment expresses that a 
certain equivalence relation holds between processes determined by the 
asserted process.
The equivalence relation concerned may be a useful equivalence relation 
when reasoning about processes in which data are involved.
However, it is not a congruence relation, i.e.\ it is not preserved by 
all contexts.
This complicates pure equational reasoning considerably.
The presented Hoare logic can be considered to be a means to get 
partially round the complications concerned.

This paper is organized as follows.
We begin with presenting \ACPei, an extension of \ACP\ with the empty 
process constant $\ep$ and the binary iteration operator $\iter$, and 
\deACPei, an extension of \ACPei\ with features that are relevant to 
processes in which data are involved 
(Sections~\ref{sect-ACPei} and~\ref{sect-deACPei}). 
We also present a structural operational semantics of \deACPei, define a 
notion of bisimulation equivalence based on this semantics, and show 
that the axioms of \deACPei\ are sound with respect to this bisimulation 
equivalence (Section~\ref{sect-sos-bisim}).
After that, we present a Hoare logic of asserted processes based on 
\deACPei, define what it means that an asserted process is true, and 
show that the axioms and rules of this Hoare logic are sound with 
respect to this meaning (Section~\ref{sect-HL}).
Following this, we go further into the connection of the presented Hoare 
logic with \deACPei\ by way of the equivalence relation referred to in 
the previous paragraph (Section~\ref{sect-using-HL}).
We also go into the use of the presented Hoare logic as a complement to 
pure equational reasoning with the axioms of \deACPei\ by means of 
examples (Section~\ref{sect-discussion}).
Finally, we discuss related work and make some concluding remarks 
(Sections~\ref{sect-related} and~\ref{sect-conclusions}).

\section{\ACP\ with the Empty Process and Iteration}
\label{sect-ACPei}

In this section, we present \ACPei, \ACP~\cite{BK84b} extended with 
the empty process constant $\ep$ as in~\cite[Section~4.4]{BW90} and the 
binary iteration operator ${} \iter {}$ as in~\cite{BBP94a}.
In \ACPei, it is assumed that a fixed but arbitrary finite set $\Act$ of 
\emph{basic actions}, with $\dead,\ep \not\in \Act$, and a fixed but 
arbitrary commutative and associative \emph{communication} function 
$\funct{\commf}
 {(\Act \union \set{\dead}) \x (\Act \union \set{\dead})}
 {(\Act \union \set{\dead})}$, 
such that $\commf(\dead,a) = \dead$ 
for all $a \in \Act \union \set{\dead}$, have been given.
Basic actions are taken as atomic processes.
The function $\commf$ is regarded to give the result of synchronously
performing any two basic actions for which this is possible, and to be 
$\dead$ otherwise.
Henceforth, we write $\Actd$ for $\Act \union \set{\dead}$.

The algebraic theory \ACPei\ has one sort: the sort $\Proc$ of 
\emph{processes}.
We make this sort explicit to anticipate the need for many-sortedness 
later on. 
The algebraic theory \ACPei\ has the following constants and operators 
to build terms of sort~$\Proc$:
\begin{itemize}
\item
the \emph{inaction} constant $\const{\dead}{\Proc}$;
\item
the \emph{empty process} constant $\const{\ep}{\Proc}$;
\item
for each $a \in \Act$, the \emph{basic action} constant 
$\const{a}{\Proc}$;
\item
the binary \emph{alternative composition} operator 
$\funct{\altc}{\Proc \x \Proc}{\Proc}$;
\item
the binary \emph{sequential composition} operator 
$\funct{\seqc}{\Proc \x \Proc}{\Proc}$;
\item
the binary \emph{iteration} operator 
$\funct{\iter}{\Proc \x \Proc}{\Proc}$;
\item
the binary \emph{parallel composition} operator 
$\funct{\parc}{\Proc \x \Proc}{\Proc}$;
\item
the binary \emph{left merge} operator 
$\funct{\leftm}{\Proc \x \Proc}{\Proc}$;
\item
the binary \emph{communication merge} operator 
$\funct{\commm}{\Proc \x \Proc}{\Proc}$;
\item
for each $H \subseteq \Act$, 
the unary \emph{encapsulation} operator 
$\funct{\encap{H}}{\Proc}{\Proc}$.
\end{itemize}
We assume that there is a countably infinite set of variables of sort 
$\Proc$, which contains $x$, $y$ and $z$.
Terms are built as usual.
We use infix notation for the binary operators.
The following precedence conventions are used to reduce the need for
parentheses: the operator ${} \seqc {}$ binds stronger than all other 
binary operators and the operator ${} \altc {}$ binds weaker than all 
other binary operators.

The constants and operators of \ACPei\ are the constants and operators 
of \ACPe~\cite[Section~4.4]{BW90} and additionally the iteration 
operator ${} \iter$.
Let $p$ and $q$ be closed \ACPei\ terms, $a \in \Act$, and
$H \subseteq \Act$.%
\footnote
{As usual, a term in which no variables occur is called a closed term.}
Then the constants and operators of \ACPei\ can be explained as follows:
\begin{itemize}
\item
the constant $\dead$ denotes the process that is not capable of doing 
anything, not even terminating successfully;
\item
the constant $\ep$ denotes the process that is only capable of 
terminating successfully;
\item
the constant $a$ denotes the process that is only capable of first 
performing action $a$ and next terminating successfully;
\item
a closed term of the form $p \altc q$ denotes the process that behaves 
either as the process denoted by $p$ or as the process denoted by $q$, 
but not both;
\item
a closed term of the form $p \seqc q$ denotes the process that first 
behaves as the process denoted by $p$ and on successful termination of 
that process next behaves as the process denoted by $q$;
\item
a closed term of the form $p \iter q$ denotes the process that behaves 
either as the process denoted by $q$ or as the process that first 
behaves as the process denoted by $p$ and on successful termination of 
that process next behaves as $p \iter q$ again;
\item
a closed term of the form $p \parc q$ denotes the process that behaves 
as the processes denoted by $p$ and $q$ taking place in parallel, by 
which we understand that, each time an action is performed, either a 
next action of one of the two processes is performed or a next action of 
the former process and a next action of the latter process are performed 
synchronously;
\item
a closed term of the form $p \leftm q$ denotes the process that behaves 
the same as the process denoted by $p \parc q$, except that it starts 
with performing an action of the process denoted by $p$;
\item
a closed term of the form $p \commm q$ denotes the process that behaves 
the same as the process denoted by $p \parc q$, except that it starts 
with performing an action of the process denoted by $p$ and an action of 
the process denoted by $q$ synchronously;
\item
a closed term of the form $\encap{H}(p)$ denotes the process that 
behaves the same as the process denoted by $p$, except that actions from 
$H$ are blocked.
\end{itemize}

The axioms of \ACPei\ are the equations given in 
Table~\ref{axioms-ACPei}.
\begin{table}[!t]
\caption{Axioms of \ACPei}
\label{axioms-ACPei}
\begin{eqntbl}
\begin{axcol}
x \altc y = y \altc x                                  & \axiom{A1} \\
(x \altc y) \altc z = x \altc (y \altc z)              & \axiom{A2} \\
x \altc x = x                                          & \axiom{A3} \\
(x \altc y) \seqc z = x \seqc z \altc y \seqc z        & \axiom{A4} \\
(x \seqc y) \seqc z = x \seqc (y \seqc z)              & \axiom{A5} \\
x \altc \dead = x                                      & \axiom{A6} \\
\dead \seqc x = \dead                                  & \axiom{A7} \\
x \seqc \ep = x                                        & \axiom{A8} \\
\ep \seqc x = x                                        & \axiom{A9} \\
{}                                                                  \\
{}                                                                  \\
x \iter y = x \seqc (x \iter y) \altc y                & \axiom{BKS1} \\
z = x \seqc z \altc y \;\Limpl\; z = x \iter y         & \axiom{RSP*}
\end{axcol}
\quad
\begin{axcol}
x \parc y = x \leftm y \altc y \leftm x \altc x \commm y \altc  
                 \encap{\Act}(x) \seqc \encap{\Act}(y) & \axiom{CM1T} \\
\ep \leftm x = \dead                                   & \axiom{CM2T} \\
a \seqc x \leftm y = a \seqc (x \parc y)               & \axiom{CM3}  \\
(x \altc y) \leftm z = x \leftm z \altc y \leftm z     & \axiom{CM4}  \\
\ep \commm x = \dead                                   & \axiom{CM5T} \\
x \commm \ep = \dead                                   & \axiom{CM6T} \\
a \seqc x \commm b \seqc y = \commf(a,b) \seqc (x \parc y) 
                                                       & \axiom{CM7}  \\
(x \altc y) \commm z = x \commm z \altc y \commm z     & \axiom{CM8}  \\
x \commm (y \altc z) = x \commm y \altc x \commm z     & \axiom{CM9}  \\
{}                                                                    \\
\encap{H}(\ep) = \ep                                   & \axiom{D0} \\
\encap{H}(a) = a               \hfill \mif a \not\in H & \axiom{D1} \\
\encap{H}(a) = \dead               \hfill \mif a \in H & \axiom{D2} \\
\encap{H}(x \altc y) = \encap{H}(x) \altc \encap{H}(y) & \axiom{D3} \\
\encap{H}(x \seqc y) = \encap{H}(x) \seqc \encap{H}(y) & \axiom{D4} 
\end{axcol}
\end{eqntbl}
\end{table}
In these equations, $a$ and $b$ stand for arbitrary constants of \ACPei\ 
that differ from $\ep$ and $H$ stands for an arbitrary subset of $\Act$.
So, CM3, CM7, and D0--D4 are actually axiom sche\-mas.
Axioms A1--A9, CM1T, CM2T, CM3, CM4, CM5T, CM6T, CM7--CM9, and D0--D4 
are the axioms of \ACPe\ (cf.~\cite[Section 4.4]{BW90}).
Axioms BKS1 and RSP* have been taken from~\cite{BFP01a}.

The iteration operator originates from~\cite{BBP94a}, where it is called 
the binary Kleene star operator.
The unary counterpart of this operator can be defined by the equation 
$x \iter = x \iter \ep$.
From this defining equation, it follows, using RSP*, that 
$x \iter = x \seqc {x \iter} \altc \ep$ and also that
$x \iter y = {x \iter} \seqc y$.

Among the equations derivable from the axioms of \ACPei\ are the 
equations concerning the iteration operator given in 
Table~\ref{deriv-eqns-ACPei}.
\begin{table}[!t]
\caption{Derivable equations for iteration}
\label{deriv-eqns-ACPei}
\begin{eqntbl}
\begin{axcol}
x \iter (y \seqc z) = (x \iter y) \seqc z              & \axiom{BKS2} \\
x \iter (y \seqc ((x \altc y) \iter z) \altc z) = (x \altc y) \iter z
                                                       & \axiom{BKS3} \\
\encap{H}(x \iter y) = \encap{H}(x) \iter \encap{H}(y) & \axiom{BKS4} \\
\ep \iter x = x                                        & \axiom{BKS5}
\end{axcol}
\end{eqntbl}
\end{table}
In the axiom system of \ACPi\ given in~\cite{BBP94a}, the axioms for the 
iteration operator are BKS1--BKS4 instead of BKS1 and RSP*. 
There exist equations derivable from the axioms of \ACPei\ that are not
derivable from the axioms of \ACPei\ with BKS1 and RSP* replaced by 
BKS1--BKS5.
For example, the equation $a \iter \dead = (a \seqc a) \iter \dead$ is 
derivable with BKS1 and RSP*, but not with BKS1--BKS5 
(cf.~\cite{Sew94a}).
Moreover, we do not see how Theorem~\ref{theorem-soundness-HL} of this 
paper can be proved if RSP* is replaced by BKS2--BKS5 (see the remark
following the proof of the theorem).

\section{Data Enriched \ACPei}
\label{sect-deACPei}

In this section, we present \deACPei, data enriched \ACPei.
This extension of \ACPei\ has been inspired by~\cite{BM09d}.
It extends \ACPei\ with features that are relevant to processes in which
data are involved, such as guarded commands (to deal with processes that 
only take place if some data-dependent condition holds), data 
parameterized actions (to deal with process interactions with data 
transfer), and assignment actions (to deal with data that change in the 
course of a process).

In \deACPei, it is assumed that the following has been given with 
respect to data:
\begin{itemize}
\item
a (single- or many-sorted) signature $\sign_\gD$ that includes a 
sort $\Data$ of \emph{data} and constants and/or operators with result 
sort $\Data$;
\item
a minimal algebra $\gD$ of the signature $\sign_\gD$.
\end{itemize}
Moreover, it is assumed that a countably infinite set $\ProgVar$ of 
\emph{flexible variables} has been given.
A flexible variable is a variable whose value may change in the course 
of a process.%
\footnote
{The term flexible variable is used for this kind of variables in 
 e.g.~\cite{Sch97a,Lam94a}.} 
Flexible variables are found under the name program variables in 
imperative programming.
We write $\DataVal$ for the set of all closed terms over the signature
$\sign_\gD$ that are of sort $\Data$.
An \emph{evaluation map} is a function $\sigma$ from $\ProgVar$ to 
$\DataVal \union \ProgVar$ where, for all $v \in \ProgVar$,
$\sigma(v) = v$ if $\sigma(v) \in \ProgVar$. 
Let $\sigma$ be an evaluation map and let $V$ be a finite subset 
of~$\ProgVar$.
Then $\sigma$ is a $V$-\emph{evaluation map} if, for all 
$v \in \ProgVar$, $\sigma(v) \in \DataVal$ iff $v \in V$.

Evaluation maps are intended to provide the data values assigned to 
flexible variables of sort $\Data$ when a term of sort $\Data$ is 
evaluated.
However, in order to fit better in an algebraic setting, they provide 
closed terms over the signature $\sign_\gD$ that denote those data 
values instead.
The requirement that $\gD$ is a minimal algebra guarantees that each 
data value can be represented by a closed term. 
The possibility to map flexible variables to themselves may be used for 
partial evaluation, i.e.\ evaluation where some flexible variables are 
not evaluated. 

The algebraic theory \deACPei\ has three sorts: 
the sort $\Proc$ of \emph{processes},
the sort $\Cond$ of \emph{conditions}, and
the sort $\Data$ of \emph{data}.
\deACPei\ has the constants and operators from $\sign_\gD$ and in 
addition the following constants to build terms of sort~$\Data$:
\begin{itemize}
\item
for each $v \in \ProgVar$, the \emph{flexible variable} constant 
$\const{v}{\Data}$.
\end{itemize}
\deACPei\ has the following constants and operators to build terms of 
sort $\Cond$:
\begin{itemize}
\item
the binary \emph{equality} operator
$\funct{\Leq}{\Data \x \Data}{\Cond}$;
\item
the \emph{truth} constant $\const{\True}{\Cond}$;
\item
the \emph{falsity} constant $\const{\False}{\Cond}$;
\item
the unary \emph{negation} operator $\funct{\Lnot}{\Cond}{\Cond}$;
\item
the binary \emph{conjunction} operator 
$\funct{\Land}{\Cond \x \Cond}{\Cond}$;
\item
the binary \emph{disjunction} operator 
$\funct{\Lor}{\Cond \x \Cond}{\Cond}$;
\item
the binary \emph{implication} operator 
$\funct{\Limpl}{\Cond \x \Cond}{\Cond}$;
\item
the unary variable-binding \emph{universal quantification} operator 
$\funct{\forall}{\Cond}{\Cond}$ that binds a variable \linebreak[2] of 
sort $\Data$; 
\item
the unary variable-binding \emph{existential quantification} operator 
$\funct{\exists}{\Cond}{\Cond}$ that binds a variable of sort $\Data$. 
\end{itemize}
\deACPei\ has the constants and operators of \ACPei\ and in addition the 
following operators to build terms of sort $\Proc$:
\begin{itemize}
\item
the binary \emph{guarded command} operator 
$\funct{\gc}{\Cond \x \Proc}{\Proc}$;
\item
for each $n \in \Nat$, for each $a \in \Act$,
the $n$-ary \emph{data parameterized action} operator
$\funct{a}
 {\underbrace{\Data \x \cdots \x \Data}_{n\; \mathrm{times}}}{\Proc}$;
\item
for each $v \in \ProgVar$, 
a unary \emph{assignment action} operator
$\funct{\assop{v}}{\Data}{\Proc}$;
\item
for each evaluation map $\sigma$, 
a unary \emph{evaluation} operator 
$\funct{\eval{\sigma}}{\Proc}{\Proc}$.
\end{itemize}

We assume that there are countably infinite sets of variables of sort 
$\Cond$ and $\Data$ and that the sets of variables of sort $\Proc$, 
$\Cond$, and $\Data$ are mutually disjoint and disjoint from $\ProgVar$.
The formation rules for terms are the usual ones for the many-sorted 
case (see e.g.~\cite{ST99a,Wir90a}) and in addition the following rule:
\begin{itemize}
\item
if $O$ is a variable-binding operator 
$\funct{O}{S_1 \x \ldots \x S_n}{S}$ that binds a variable of sort~$S'$,
$t_1,\ldots,t_n$~are terms of sorts $S_1,\ldots,S_n$, respectively, and 
$X$ is a variable of sort $S'$, then $O X (t_1,\ldots,t_n)$ is a term of 
sort $S$ (cf.~\cite{PS95a}).
\end{itemize}
We use the same notational conventions as before.
We also use infix notation for the additional binary operators.
Moreover, we use the notation $\ass{v}{e}$, where $v \in \ProgVar$ and 
$e$ is a term of sort $\Data$, for the term $\assop{v}(e)$.

We use the notation  $\phi \Liff \psi$, where $\phi$ and $\psi$ are 
terms of sort $\Cond$, for the term
$(\phi \Limpl \psi) \Land (\psi \Limpl \phi)$.
Moreover, we use the notation $\LOR \Phi$, 
where $\Phi = \set{\phi_1,\ldots,\phi_n}$ and
$\phi_1, \ldots, \phi_n$ are terms of sort $\Cond$,
for the term $\phi_1 \Lor \ldots \Lor \phi_n$. 

We write 
$\ProcTerm$ for the set of all closed terms of sort $\Proc$,
$\CondTerm$ for the set of all closed terms of sort $\Cond$, and
$\DataTerm$ for the set of all closed terms of sort~$\Data$.

Each term from $\CondTerm$ can be taken as a formula of a first-order 
language with equality of $\gD$ by taking the flexible variable 
constants as additional variables of sort $\Data$.
We implicitly take the flexible variable constants as additional 
variables of sort $\Data$ wherever the context asks for a formula.
In this way, each term from $\CondTerm$ can be interpreted as a
formula in $\gD$.
The axioms of \deACPei\ (given below) include an equation $\phi = \psi$ 
for each two terms $\phi$ and $\psi$ from $\CondTerm$ for which the 
formula $\phi \Liff \psi$ holds in $\gD$.

Let $p$ be a term from $\ProcTerm$, $\phi$ be a term from $\CondTerm$, 
and $e_1,\ldots,e_n$ and $e$ be terms from $\DataTerm$. 
Then the additional operators can be explained as follows:
\begin{itemize}
\item
the term $\phi \gc p$ denotes the process that behaves as the process 
denoted by $p$ under condition $\phi$;
\item
\sloppy
the term $a(e_1,\ldots,e_n)$ denotes the process that is only capable of 
first performing action $a(e_1,\ldots,e_n)$ and next terminating 
successfully;
\item
the term $\ass{v}{e}$ denotes the process that is only capable of first 
performing action $\ass{v}{e}$, whose intended effect is the assignment 
of the result of evaluating $e$ to flexible variable $v$, and next 
terminating successfully; 
\item
the term $\eval{\sigma}(p)$ denotes the process that behaves the same as 
the process denoted by $p$ except that each subterm of $p$ that belongs 
to $\DataTerm$ is evaluated using the evaluation map $\sigma$ updated 
according to the assignment actions that have taken place at the point 
where the subterm is encountered.
\end{itemize}
Evaluation operators are a variant of state operators 
(see e.g.~\cite{BB88}).

The guarded command operator is often used to construct \deACPei\ terms 
that are reminiscent of control flow statements of imperative 
programming languages.
For example, terms of the form 
$\phi \gc t \altc (\Lnot\, \phi) \gc t'$ 
are reminiscent of if-then-else statements and terms of the form 
$(\phi \gc t) \iter ((\Lnot\, \phi) \gc \ep)$ 
are reminiscent of while-do statements.
The following \deACPei\ term contains a subterm of the latter form 
($i, j, q, r \in \ProgVar$):
\begin{ldispl}
\ass{q}{0} \seqc \ass{r}{i} \seqc 
(((r \geq j) \gc \ass{q}{q + 1} \seqc \ass{r}{r - j}) \iter 
 ((\Lnot\; r \geq j) \gc \ep))\;.
\end{ldispl}%
This term is reminiscent of a program that computes the quotient and 
remainder of dividing two integers by repeated subtraction.
That is, the final values of $q$ and $r$ are the quotient and remainder 
of dividing the initial value of $i$ by the initial value of $j$.
An evaluation operator can be used to show that this is the case for
given initial values of $i$ and $j$.
For example, consider the case where the initial values of $i$ and $j$
are $11$ and $3$, respectively.
Let $\sigma$ be an evaluation map such that $\sigma(i) = 11$ and
$\sigma(j) = 3$.
Then the following equation can be derived from the axioms of \deACPei\
given below:
\begin{ldispl}
\eval{\sigma}
(\ass{q}{0} \seqc \ass{r}{i} \seqc 
(((r \geq j) \gc \ass{q}{q + 1} \seqc \ass{r}{r - j}) \iter 
 ((\Lnot\; r \geq j) \gc \ep))) \\
\; {} =
\ass{q}{0} \seqc \ass{r}{11} \seqc \ass{q}{1} \seqc \ass{r}{8} \seqc 
\ass{q}{2} \seqc \ass{r}{5} \seqc \ass{q}{3} \seqc \ass{r}{2}\;. 
\end{ldispl}%
This equation shows that in the case where the initial values of $i$ 
and $j$ are $11$ and $3$ the final values of $q$ and $r$ are $3$ and 
$2$ (which are the quotient and remainder of dividing $11$ by $3$).

An evaluation map $\sigma$ can be extended homomorphically from flexible
variables to terms of sort $\Data$ and terms of sort $\Cond$.
These extensions are denoted by $\sigma$ as well.
We write $\sigma\mapupd{e}{v}$ for the evaluation map $\sigma'$ defined 
by $\sigma'(v') = \sigma(v')$ if $v' \not\equiv v$ and $\sigma'(v) = e$.

The axioms of \deACPei\ are the axioms of \ACPei\ and in addition the 
equations given in Table~\ref{axioms-deACPei}.%
\begin{table}[!p]
\caption{Axioms of \deACPei}
\label{axioms-deACPei}
\begin{eqntbl}
\begin{saxcol}
e = e'         & \mif \Sat{\gD}{\fol{e = e'}}          & \axiom{IMP1} \\
\phi = \psi    & \mif \Sat{\gD}{\fol{\phi \Liff \psi}} & \axiom{IMP2} \\
{}                                                                    \\
\True \gc x = x                                      & & \axiom{GC1}  \\
\False \gc x = \dead                                 & & \axiom{GC2}  \\
\phi \gc \dead = \dead                               & & \axiom{GC3}  \\
\phi \gc (x \altc y) = \phi \gc x \altc \phi \gc y   & & \axiom{GC4}  \\
\phi \gc x \seqc y = (\phi \gc x) \seqc y            & & \axiom{GC5}  \\
\phi \gc (\psi \gc x) = (\phi \Land \psi) \gc x      & & \axiom{GC6}  \\
(\phi \Lor \psi) \gc x = \phi \gc x \altc \psi \gc x & & \axiom{GC7}  \\
(\phi \gc x) \leftm y = \phi \gc (x \leftm y)        & & \axiom{GC8}  \\
(\phi \gc x) \commm y = \phi \gc (x \commm y)        & & \axiom{GC9}  \\
x \commm (\phi \gc y) = \phi \gc (x \commm y)        & & \axiom{GC10} \\
\encap{H}(\phi \gc x) = \phi \gc \encap{H}(x)        & & \axiom{GC11} \\
{}                                                                    \\
\eval{\sigma}(\ep) = \ep                             & & \axiom{V0}   \\
\eval{\sigma}(a \seqc x) = a \seqc \eval{\sigma}(x)  & & \axiom{V1}   \\
\eval{\sigma}(a(e_1,\ldots,e_n) \seqc x) = 
a(\sigma(e_1),\ldots,\sigma(e_n)) \seqc \eval{\sigma}(x)
                                                     & & \axiom{V2}   \\
\eval{\sigma}(\ass{v}{e} \seqc x) = 
{\assd{v}{\sigma(e)} \seqc \eval{\sigma\mapupd{\sigma(e)}{v}}(x)} 
                                                     & & \axiom{V3}   \\
\eval{\sigma}(x \altc y) = \eval{\sigma}(x) \altc \eval{\sigma}(y)
                                                     & & \axiom{V4}   \\
\eval{\sigma}(\phi \gc y) = \sigma(\phi) \gc \eval{\sigma}(x)
                                                     & & \axiom{V5}   \\
{}                                                                    \\
a(e_1,\ldots,e_n) \seqc x \leftm y = a(e_1,\ldots,e_n) \seqc (x \parc y)
                                                    & & \axiom{CM3D}  \\
a(e_1,\ldots,e_n) \seqc x \commm b(e'_1,\ldots,e'_n) \seqc y = 
 {} \\ \quad
(e_1 = e'_1 \Land \ldots \Land e_n = e'_n) \gc c(e_1,\ldots,e_n) \seqc
(x \parc y)                    & \mif \commf(a,b) = c & \axiom{CM7Da} \\
a(e_1,\ldots,e_n) \seqc x \commm b(e'_1,\ldots,e'_m) \seqc y = \dead
  & \mif \commf(a,b) = \dead \;\mathrm{or}\; n \neq m & \axiom{CM7Db} \\
a(e_1,\ldots,e_n) \seqc x \commm b \seqc y = \dead  & & \axiom{CM7Dc} \\
a \seqc x \commm b(e_1,\ldots,e_n) \seqc y = \dead  & & \axiom{CM7Dd} \\
\encap{H}(a(e_1,\ldots,e_n)) = a(e_1,\ldots,e_n) 
                                     & \mif a \not\in H & \axiom{D1D} \\
\encap{H}(a(e_1,\ldots,e_n)) = \dead & \mif a \in H     & \axiom{D2D} \\
{}                                                                    \\
\ass{v}{e} \seqc x \leftm y = \ass{v}{e} \seqc (x\parc y)
                                                    & & \axiom{CM3A}  \\
\ass{v}{e} \seqc x \commm y = \dead                 & & \axiom{CM5A}  \\
x \commm \ass{v}{e} \seqc y = \dead                 & & \axiom{CM6A}  \\
\encap{H}(\ass{v}{e}) = \ass{v}{e}                  & & \axiom{D1A}    
\end{saxcol}
\end{eqntbl}
\end{table}
In these equations, 
$\phi$ and $\psi$ stand for arbitrary terms from $\CondTerm$, 
$e$, $e_1,e_2,\ldots$, and $e'$, $e'_1,e'_2,\ldots$ stand for arbitrary 
terms from $\DataTerm$, 
$v$ stands for an arbitrary flexible variable from $\ProgVar$,
$\sigma$ stands for an arbitrary evaluation map,
$a$ and $b$ stand for arbitrary constants of \deACPei\ that differ from 
$\ep$, 
$c$ stands for an arbitrary constant of \deACPei\ that differ from 
$\ep$ and $\dead$, and
$H$ stands for an arbitrary subset of $\Act$.
Axioms GC1--GC11 have been taken from~\cite{BB92c} (using a different 
numbering), but with the axioms with occurrences of conditional 
expressions of the form $\cond{p}{\phi}{q}$ replaced by simpler axioms.
Axioms CM3D, CM7Da, CM7Db, D1D, and D2D have been inspired 
by~\cite{BM09d}.

The set $\AProcTerm$ of \emph{actions} of \deACPei\ is inductively 
defined by the following rules:
\begin{itemize}
\item
if $a \in \Act$, then $a \in \AProcTerm$;
\item
if $a \in \Act$ and $e_1,\dots,e_n \in \DataTerm$, then
$a(e_1,\dots,e_n) \in \AProcTerm$;
\item
if $v \in \ProgVar$ and $e \in \DataTerm$, then
$\ass{v}{e} \in \AProcTerm$.
\end{itemize}
The elements of $\AProcTerm$ are the processes that are considered to be 
atomic.

The set $\HNF$ of \emph{head normal forms} of \deACPei\ is inductively 
defined by the following rules:
\begin{itemize}
\item 
$\dead \in \HNF$;
\item 
if $\phi \in \CondTerm$, then $\phi \gc \ep \in \HNF$;
\item 
if $\phi \in \CondTerm$, $\alpha \in \AProcTerm$, and $p \in \ProcTerm$, 
then $\phi \gc a \seqc p \in \HNF$;
\item 
if $p,p' \in \HNF$, then $p \altc p' \in \HNF$.
\end{itemize}
The following lemma about head normal forms is used in later sections.
\begin{lemma}
\label{lemma-HNF}
For all terms $p \in \ProcTerm$, there exists a term $q \in \HNF$ such 
that $p = q$ is derivable from the axioms of \deACPei.
\end{lemma}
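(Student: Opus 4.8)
The plan is to prove the lemma by induction on the structure of the closed term $p$. For each leading operator it suffices to show that, once the proper subterms have been rewritten to head normal forms by the induction hypothesis, the operator applied to those head normal forms can itself be rewritten to a term in $\HNF$. The guiding observation that keeps this induction well founded is that the subterm following a leading action in a summand $\phi \gc \alpha \seqc r$ may remain an \emph{arbitrary} element of $\ProcTerm$; these ``tails'' never have to be normalised, so I never need to re-enter the main induction on a term that is not a proper subterm of $p$. The base cases are immediate: $\dead \in \HNF$ already, $\ep = \True \gc \ep$ by GC1, and $a = \True \gc a \seqc \ep$ (likewise $a(e_1,\ldots,e_n)$ and $\ass{v}{e}$) by A8 and GC1. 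Throughout I will also use $\dead = \False \gc \ep$ (GC2) to treat stray $\dead$ summands uniformly, e.g.\ to get $\encap{H}(\dead) = \dead$ (GC11, D0) and $\eval{\sigma}(\dead) = \dead$ (V5, V0).

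The cases for $\altc$, $\gc$, $\seqc$, $\encap{H}$ and $\eval{\sigma}$ all follow the pattern ``distribute over the summands of the normalised operand, then eliminate''. Concretely, $\gc$ uses GC4 followed by GC3 and GC6; $\seqc$ uses A4, then A7 for the $\dead$ summand, GC5 together with A9 to rewrite $(\phi \gc \ep) \seqc h$ to $\phi \gc h$ (renormalised by the $\gc$ case), and GC5 with A5 for the action-prefixed summands; $\encap{H}$ uses D3, GC11, D0/D1/D2/D4 and the data and assignment variants D1D/D2D/D1A; and $\eval{\sigma}$ uses V4, V5 and V0--V3. The left merge $\leftm$ needs only the summand structure of its \emph{left} operand: CM4, GC8, CM2T, and the prefix axioms CM3/CM3D/CM3A, yielding tails of the form $p_i \parc r$. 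The communication merge $\commm$ is the most laborious case, requiring, after distribution with CM8 and CM9, a case analysis over all pairs of summand shapes, handled by CM5T/CM6T/CM5A/CM6A for the $\dead$ and assignment cases and by CM7/CM7Da--CM7Dd for the action cases, with guards merged via GC6, GC9 and GC10 and vanishing communications removed via A7 and GC3. Finally $\parc$ is reduced by CM1T to the already treated operators $\leftm$, $\commm$, $\encap{\Act}$ and $\seqc$ applied to the normalised subterms.

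I expect the iteration operator to be the only genuine obstacle. Writing the head normal form of the first operand as $A' \altc \Psi \gc \ep$, where $A'$ collects the action-prefixed summands and $\Psi$ is the disjunction of the conditions guarding the terminating summands (the $\dead$ summand being absorbed via $\dead = \False \gc \ep$), one cannot simply unfold with BKS1: the summand $\Psi \gc \ep$ would produce, after GC5 and A9, a guarded recursive occurrence $\Psi \gc (p_1 \iter p_2)$, which is not an $\HNF$ summand. The remedy, and the point at which RSP* becomes indispensable, is the auxiliary equation $(A' \altc \Psi \gc \ep) \iter p_2 = A' \iter p_2$. To derive it, put $z := A' \iter p_2$, so $z = A' \seqc z \altc p_2$ by BKS1, and compute $(A' \altc \Psi \gc \ep) \seqc z \altc p_2 = A' \seqc z \altc \Psi \gc z \altc p_2 = z \altc \Psi \gc z = z$, where the last step uses the identity $w \altc \Psi \gc w = w$ (from GC1, GC7 and $\True \Lor \Psi \Liff \True$ in $\gD$); RSP* then gives $(A' \altc \Psi \gc \ep) \iter p_2 = z = A' \iter p_2$. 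After the terminating summand has been removed in this way, $A'$ consists exclusively of action-prefixed summands, so a single unfolding of $A' \iter p_2$ by BKS1 yields $\sum_i \phi_i \gc \alpha_i \seqc (r_i \seqc (A' \iter p_2)) \altc h_2$, with $h_2$ the head normal form of $p_2$; since $A'$ has no termination summand, the $(\phi \gc \ep)\seqc(\cdots)$ subcase of the $\seqc$ reduction never arises, and this term is already in $\HNF$ with arbitrary tails. This is precisely the place anticipated by the remark in Section~\ref{sect-ACPei} that RSP*, rather than BKS2--BKS5, is what the developments leading to Theorem~\ref{theorem-soundness-HL} rely on.
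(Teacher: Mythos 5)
Your proposal is correct and follows essentially the same route as the paper's proof: structural induction on $p$, with each operator case discharged by an auxiliary induction over the summand structure of the normalised operands (and a double such induction for $\commm$). The one place where you go beyond the paper's sketch is the iteration case, and your treatment there is sound: the RSP*-based identity $(A' \altc \Psi \gc \ep) \iter p_2 = A' \iter p_2$ correctly disposes of the terminating summands that would otherwise make the naive BKS1 unfolding circular.
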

\begin{proof}
This is straightforwardly proved by induction on the structure of $p$.
The cases where $p$ is of the form $\dead$, $\ep$ or $\alpha$
($\alpha \in \AProcTerm$) are trivial.
The case where $p$ is of the form $p_1 \altc p_2$ follows immediately
from the induction hypothesis. 
The case where $p$ is of the form $p_1 \parc p_2$ follows immediately
from the case that $p$ is of the form $p_1 \leftm p_2$ and the case that
$p$ is of the form $p_1 \commm p_2$.
Each of the other cases follow immediately from the induction hypothesis
and a claim that is easily proved by structural induction.
In the case where $p$ is of the form $p_1 \commm p_2$, each of the cases
to be considered in the inductive proof demands an additional proof by
structural induction.
\qed
\end{proof}

Some earlier extensions of \ACP\ include Hoare's ternary counterpart of 
the binary guarded command operator (see e.g.~\cite{BB92c}).  
This operator can be defined by the equation
$\cond{x}{u}{y} = u \gc x \altc (\Lnot\, u) \gc y$. 
From this defining equation, it follows that  
$u \gc x = \cond{x}{u}{\dead}$.
In~\cite{GP94b}, a unary counterpart of the binary guarded command 
operator is used.
This operator can be defined by the equation $\guard{u} = u \gc \ep$.
From this defining equation, it follows that  
$u \gc x = \guard{u} \seqc x$ and also that $\guard{\True} = \ep$ and
$\guard{\False} = \dead$.
In~\cite{GP94b}, the processes denoted by closed terms of the form 
$\guard{\phi}$ are called guards.


\section{Structural Operational Semantics and Bisimulation Equivalence}
\label{sect-sos-bisim}

In this section, we present a structural operational semantics of 
\deACPei, define a notion of bisimulation equivalence based on this 
semantics, and show that the axioms of \deACPei\ are sound with respect 
to this bisimulation equivalence.

We write $\sCondTerm$ for the set of all terms $\phi \in \CondTerm$ for 
which $\nSat{\gD}{\phi \Liff \False}$.
As formulas of a first-order language with equality of $\gD$, the terms 
from $\sCondTerm$ are the formulas that are satisfiable in $\gD$.

We start with the presentation of the structural operational semantics 
of \deACPei.
The following transition relations on $\ProcTerm$ are used:
\begin{itemize}
\item 
for each $\phi \in \sCondTerm$,
a unary relation ${\sterm{\phi}}$;
\item 
for each $\ell \in \sCondTerm \x \AProcTerm$,
a binary relation ${\step{\ell}}$.
\end{itemize}

We write $\isterm{p}{\phi}$ instead of 
$p \in {\sterm{\phi}}$ and
$\astep{p}{\gact{\phi}{\alpha}}{q}$ instead of 
$\tup{p,q} \in {\step{\tup{\phi,\alpha}}}$. 
The relations ${\sterm{\phi}}$ and ${\step{\ell}}$ can be explained as 
follows:
\begin{itemize}
\item
$\isterm{p}{\phi}$: 
$p$ is capable of terminating successfully under condition $\phi$;
\item
$\astep{p}{\gact{\phi}{\alpha}}{q}$: 
$p$ is capable of performing action $\alpha$ under condition $\phi$ and 
then proceeding as $q$.
\end{itemize}

The structural operational semantics of \deACPei\ is described by the 
transition rules given in Table~\ref{sos-deACPei}.%
\begin{table}[!p]
\caption{Transition rules for \deACPei}
\label{sos-deACPei}
\begin{ruletbl}
{} \\[-3ex]
\Rule
{}
{\isterm{\ep}{\True}}
\\[-2ex]
\Rule
{}
{\astep{a}{\gact{\True}{a}}{\ep}}
\quad\;
\Rule
{}
{\astep{a(e_1,\ldots,e_n)}{\gact{\True}{a(e_1,\ldots,e_n)}}{\ep}}
\quad\;
\Rule
{}
{\astep{\ass{v}{e}}{\gact{\True}{\ass{v}{e}}}{\ep}}
\\
\Rule
{\isterm{x}{\phi}}
{\isterm{x \altc y}{\phi}}
\quad\;
\Rule
{\isterm{y}{\phi}}
{\isterm{x \altc y}{\phi}}
\quad\;
\Rule
{\astep{x}{\gact{\phi}{\alpha}}{x'}}
{\astep{x \altc y}{\gact{\phi}{\alpha}}{x'}}
\quad\;
\Rule
{\astep{y}{\gact{\phi}{\alpha}}{y'}}
{\astep{x \altc y}{\gact{\phi}{\alpha}}{y'}}
\\
\RuleC
{\isterm{x}{\phi},\; \isterm{y}{\psi}}
{\isterm{x \seqc y}{\phi \Land \psi}}
{\nSat{\gD}{\fol{\phi \Land \psi \Liff \False}}}
\quad\;
\RuleC
{\isterm{x}{\phi},\; \astep{y}{\gact{\psi}{\alpha}}{y'}}
{\astep{x \seqc y}{\gact{\phi \Land \psi}{\alpha}}{y'}}
{\nSat{\gD}{\fol{\phi \Land \psi \Liff \False}}}
\quad\;
\Rule
{\astep{x}{\gact{\phi}{\alpha}}{x'}}
{\astep{x \seqc y}{\gact{\phi}{\alpha}}{x' \seqc y}}
\\
\Rule
{\isterm{y}{\phi}}
{\isterm{x \iter y}{\phi}}
\quad\;
\Rule
{\astep{y}{\gact{\phi}{\alpha}}{y'}}
{\astep{x \iter y}{\gact{\phi}{\alpha}}{y'}}
\quad\;
\Rule
{\astep{x}{\gact{\phi}{\alpha}}{x'}}
{\astep{x \iter y}{\gact{\phi}{\alpha}}{x' \seqc (x \iter y)}}
\\
\RuleC
{\isterm{x}{\phi}}
{\isterm{\psi \gc x}{\phi \Land \psi}}
{\nSat{\gD}{\fol{\phi \Land \psi \Liff \False}}}
\quad\;
\RuleC
{\astep{x}{\gact{\phi}{\alpha}}{x'}}
{\astep{\psi \gc x}{\gact{\phi \Land \psi}{\alpha}}{x'}}
{\nSat{\gD}{\fol{\phi \Land \psi \Liff \False}}}
\\
\RuleC
{\isterm{x}{\phi},\; \isterm{y}{\psi}}
{\isterm{x \parc y}{\phi \Land \psi}}
{\nSat{\gD}{\fol{\phi \Land \psi \Liff \False}}}
\quad\;
\Rule
{\astep{x}{\gact{\phi}{\alpha}}{x'}}
{\astep{x \parc y}{\gact{\phi}{\alpha}}{x' \parc y}}
\quad\;
\Rule
{\astep{y}{\gact{\phi}{\alpha}}{y'}}
{\astep{x \parc y}{\gact{\phi}{\alpha}}{x \parc y'}}
\\
\RuleC
{\astep{x}{\gact{\phi}{a}}{x'},\; \astep{y}{\gact{\psi}{b}}{y'}}
{\astep{x \parc y}{\gact{\phi \Land \psi}{c}}{x' \parc y'}}
{\commf(a,b) = c,\; \nSat{\gD}{\fol{\phi \Land \psi \Liff \False}}}
\\
\RuleC
{\astep{x}{\gact{\phi}{a(e_1,\ldots,e_n)}}{x'},\; 
 \astep{y}{\gact{\psi}{b(e'_1,\ldots,e'_n)}}{y'}}
{\astep{x \parc y}
  {\gact{\phi \Land \psi \Land e_1 = e'_1 \Land \ldots \Land  e_n = e'_n}
  {c(e_1,\ldots,e_n)}}{x' \parc y'}}
{\begin{array}{@{}c@{}}
 \commf(a,b) = c, \\[.25ex] 
 \nSat{\gD}
  {\fol{\phi \Land \psi \Land e_1 = e'_1 \Land \ldots \Land  e_n = e'_n \Liff
        \False}}
 \end{array}
}
\\
\Rule
{\astep{x}{\gact{\phi}{\alpha}}{x'}}
{\astep{x \leftm y}{\gact{\phi}{\alpha}}{x' \parc y}}
\\
\RuleC
{\astep{x}{\gact{\phi}{a}}{x'},\; \astep{y}{\gact{\psi}{b}}{y'}}
{\astep{x \commm y}{\gact{\phi \Land \psi}{c}}{x' \parc y'}}
{\commf(a,b) = c,\; \nSat{\gD}{\fol{\phi \Land \psi \Liff \False}}}
\\
\RuleC
{\astep{x}{\gact{\phi}{a(e_1,\ldots,e_n)}}{x'},\; 
 \astep{y}{\gact{\psi}{b(e'_1,\ldots,e'_n)}}{y'}}
{\astep{x \commm y}
  {\gact{\phi \Land \psi \Land e_1 = e'_1 \Land \ldots \Land  e_n = e'_n}
  {c(e_1,\ldots,e_n)}}{x' \parc y'}}
{\begin{array}{@{}c@{}}
 \commf(a,b) = c, \\[.25ex]
 \nSat{\gD}
  {\fol{\phi \Land \psi \Land e_1 = e'_1 \Land \ldots \Land  e_n = e'_n \Liff
        \False}}
 \end{array}
}
\\
\Rule
{\isterm{x}{\phi}}
{\isterm{\encap{H}(x)}{\phi}}
\quad\;
\RuleC
{\astep{x}{\gact{\phi}{a}}{x'}}
{\astep{\encap{H}(x)}{\gact{\phi}{a}}{\encap{H}(x')}}
{a \not\in H}
\quad\;
\RuleC
{\astep{x}{\gact{\phi}{a(e_1,\ldots,e_n)}}{x'}}
{\astep{\encap{H}(x)}{\gact{\phi}{a(e_1,\ldots,e_n)}}{\encap{H}(x')}}
{a \not\in H}
\\
\Rule
{\astep{x}{\gact{\phi}{\ass{v}{e}}}{x'}}
{\astep{\encap{H}(x)}{\gact{\phi}{\ass{v}{e}}}{\encap{H}(x')}}
\\
\Rule
{\isterm{x}{\phi}}
{\isterm{\eval{\sigma}(x)}{\sigma(\phi)}}
\quad\;
\Rule
{\astep{x}{\gact{\phi}{a}}{x'}}
{\astep{\eval{\sigma}(x)}{\gact{\sigma(\phi)}{a}}{\eval{\sigma}(x')}}
\quad\;
\Rule
{\astep{x}{\gact{\phi}{a(e_1,\ldots,e_n)}}{x'}}
{\astep{\eval{\sigma}(x)}
       {\gact{\sigma(\phi)}{a(\sigma(e_1),\ldots,\sigma(e_n))}}
       {\eval{\sigma}(x')}}
\\
\RuleC
{\astep{x}{\gact{\phi}{\ass{v}{e}}}{x'}}
{\astep{\eval{\sigma}(x)}{\gact{\sigma(\phi)}{\ass{v}{\sigma(e)}}}
       {\eval{\sigma\mapupd{\sigma(e)}{v}}(x')}}
{\sigma(v) \in \DataVal}
\quad\;
\RuleC
{\astep{x}{\gact{\phi}{\ass{v}{e}}}{x'}}
{\astep{\eval{\sigma}(x)}{\gact{\sigma(\phi)}{\ass{v}{\sigma(e)}}}
       {\eval{\sigma}(x')}}
{\sigma(v) \notin \DataVal}
\end{ruletbl}
\end{table}
In this table, 
$a$, $b$, and $c$ stand for arbitrary basic actions from $\Act$,
$v$ stands for an arbitrary flexible variable from $\ProgVar$,
$e$ and $e_1,e_2,\ldots$ stand for arbitrary terms from $\DataTerm$,
$\phi$ and $\psi$ stand for arbitrary terms from $\sCondTerm$,
$\alpha$ stands for an arbitrary term from $\AProcTerm$,
$H$ stands for arbitrary subset of $\Act$, and
$\sigma$ stands for an arbitrary evaluation map.

Two process are considered equal if they can simulate each other.
In order to make this precise, we will define the notion of 
bisimulation equivalence on the set $\ProcTerm$ below.
In the definition concerned, we need an equivalence relation on the set 
$\AProcTerm$.

Two actions $\alpha,\alpha' \in \AProcTerm$ are \emph{data equivalent},
written $\alpha \simeq \alpha'$, iff one of the following holds:
\begin{itemize}
\item
there exists an $a \in \Act$ such that $\alpha = a$ and $\alpha' = a$;
\item
there exist an $a \in \Act$ and 
$e_1,\dots,e_n,e'_1,\dots,e'_n \in \DataTerm$
such that 
$\Sat{\gD}
  {\fol{e_1 = e'_1 \Land\linebreak[2] \ldots \Land  e_n = e'_n}}$,
$\alpha = a(e_1,\dots,e_n)$, and $\alpha' = a(e'_1,\dots,e'_n)$;
\item
there exist a $v \in \ProgVar$ and $e,e' \in \DataTerm$ such that 
$\Sat{\gD}{\fol{e = e'}}$, $\alpha = \ass{v}{e}$, and 
$\alpha' = \ass{v}{e'}$.
\end{itemize}
We write $[\alpha]$, where $\alpha \in \AProcTerm$, for the equivalence 
class of $\alpha$ with respect to $\simeq$.

A \emph{bisimulation} is a binary relation $R$ on $\ProcTerm$ such that, 
for all terms $p,q \in \ProcTerm$ with $(p,q) \in R$, the following 
conditions hold:
\begin{itemize}
\item
if $\astep{p}{\gact{\phi}{\alpha}}{p'}$, then 
there exists a finite set $\Psi \subseteq \sCondTerm$ such that 
$\Sat{\gD}{\phi \Limpl \LOR \Psi}$ and, 
for all $\psi \in \Psi$, there exist an $\alpha' \in [\alpha]$ and a 
$q' \in \ProcTerm$ such that 
\smash{$\astep{q}{\gact{\psi}{\alpha'}}{q'}$} and $(p',q') \in R$;
\item
if $\astep{q}{\gact{\phi}{\alpha}}{q'}$, then 
there exists a finite set $\Psi \subseteq \sCondTerm$ such that 
$\Sat{\gD}{\phi \Limpl \LOR \Psi}$ and, 
for all $\psi \in \Psi$, there exist an $\alpha' \in [\alpha]$ and a 
$p' \in \ProcTerm$ such that 
\smash{$\astep{p}{\gact{\psi}{\alpha'}}{p'}$} and $(p',q') \in R$;
\item
if $\isterm{p}{\phi}$, then 
there exists a finite set $\Psi \subseteq \sCondTerm$ such that 
$\Sat{\gD}{\phi \Limpl \LOR \Psi}$ and, 
for all $\psi \in \Psi$, $\isterm{q}{\psi}$;
\item
if $\isterm{q}{\phi}$, then 
there exists a finite set $\Psi \subseteq \sCondTerm$ such that 
$\Sat{\gD}{\phi \Limpl \LOR \Psi}$ and, 
for all $\psi \in \Psi$, $\isterm{p}{\psi}$.
\end{itemize}
Two terms $p,q \in \ProcTerm$ are \emph{bisimulation equivalent}, 
written $p \bisim q$, if there exists a bisimulation $R$ such that 
$(p,q) \in R$.
Let $R$ be a bisimulation such that $(p,q) \in R$.
Then we say that $R$ is a bisimulation \emph{witnessing} $p \bisim q$.

The above definition of a bisimulation deviates from the standard 
definition because a transition on one side may be simulated by a set
of transitions on the other side. 
For example, 
the transition
\smash{$\astep{(\phi_1 \Lor \phi_2) \gc a \seqc b}
        {\gact{\phi_1 \Lor \phi_2}{a}}{b}$}
is simulated by
the set of transitions consisting of
\smash{$\astep{\phi_1 \gc a \seqc b}{\gact{\phi_1}{a}}{b}$} and 
\smash{$\astep{\phi_2 \gc a \seqc b}{\gact{\phi_2}{a}}{b}$}.
A bisimulation as defined above is called a \emph{splitting}
bisimulation in~\cite{BM05a}.

Bisimulation equivalence is a congruence with respect to the operators 
of \deACPei\ of which the result sort and at least one argument sort is 
$\Proc$.
\begin{theorem}[Congruence]
\label{theorem-congruence-ACPei}
For all terms $p,q,p',q' \in \ProcTerm$ and all terms 
$\phi \in \CondTerm$, 
$p \bisim p'$ and $q \bisim q'$ only if 
$p \altc q \bisim p' \altc q'$, 
$p \seqc q \bisim p' \seqc q'$,
$p \iter q \bisim p' \iter q'$,
$\phi \gc p \bisim \phi \gc p'$, 
$p \parc q \bisim p' \parc q'$, 
$p \leftm q \bisim p' \leftm q'$,
$p \commm q \bisim p' \commm q'$,
$\encap{H}(p) \bisim \encap{H}(p')$, and
$\eval{\sigma}(p) \bisim \eval{\sigma}(p')$.
\end{theorem}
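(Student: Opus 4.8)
The plan is to establish congruence operator by operator. Since $p \bisim q$ holds exactly when some bisimulation relates $p$ and $q$, it suffices, for each operator, to fix bisimulations $R_1$ witnessing $p \bisim p'$ and $R_2$ witnessing $q \bisim q'$ and to exhibit a single relation $R$ that contains the required pair and meets the four transfer conditions of a splitting bisimulation. The step I would set up first is a combinatorial lemma about covering sets that absorbs most of the difficulty introduced by the splitting definition: if $\Sat{\gD}{\phi_1 \Limpl \LOR \Psi_1}$ and $\Sat{\gD}{\phi_2 \Limpl \LOR \Psi_2}$, then the set $\set{\psi_1 \Land \psi_2 : \psi_1 \in \Psi_1,\; \psi_2 \in \Psi_2,\; \nSat{\gD}{\fol{\psi_1 \Land \psi_2 \Liff \False}}}$ is a subset of $\sCondTerm$ that covers $\phi_1 \Land \phi_2$. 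This is exactly the shape needed to recombine the matching condition sets delivered by $R_1$ and $R_2$ whenever an SOS rule conjoins two conditions and carries a satisfiability side condition; the discarded unsatisfiable conjuncts are harmless, since any valuation of the flexible variables making $\phi_1 \Land \phi_2$ true keeps the corresponding conjunct satisfiable.

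With this lemma available, the structural operators are routine. For $\altc$ the transitions and terminations of $p \altc q$ are literally those of $p$ together with those of $q$, so $R = R_1 \cup R_2$ already works once the pair $(p \altc q, p' \altc q')$ is added. For $\seqc$ I would take $R = \set{(r \seqc q, r' \seqc q') : (r,r') \in R_1} \cup R_2$: a transition of $r \seqc q$ either comes from $r$ stepping (matched by $R_1$ with the tail $q$, respectively $q'$, reattached) or from $r$ terminating under $\phi_1$ while $q$ steps under $\phi_2$, and the latter is handled by applying the covering lemma to the termination set for $r'$ and the matching set for $q$. Iteration uses the recursive relation $R = \set{(r \seqc (p \iter q), r' \seqc (p' \iter q')) : (r,r') \in R_1} \cup \set{(p \iter q, p' \iter q')} \cup R_2$, mirroring the reappearance of $p \iter q$ in the target of an unfolding step. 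The guarded command $\phi \gc p$, with the same $\phi$ on both sides, is handled by conjoining the fixed $\phi$ to every matching condition, again via the covering lemma, and encapsulation is the plain structural relation $R = \set{(\encap{H}(r), \encap{H}(r')) : (r,r') \in R_1}$, using that $\encap{H}$ leaves labels unchanged and blocks the actions of $H$ uniformly on both sides.

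I expect the first genuine obstacle to be the parallel operators $\parc$, $\leftm$, and $\commm$, which I would treat together with the single relation built from all three term shapes over $R_1$ and $R_2$ (note that the left-merge and communication-merge rules produce $\parc$-targets). The interleaving rules keep the label unchanged and so reuse the matching set of one component directly, exactly as for $\altc$. The communication rules are the crux: a step labelled $c(e_1,\dots,e_n)$ under $\phi \Land \psi \Land e_1 = e'_1 \Land \ldots \Land e_n = e'_n$ must be matched by forming the product of a matching set for the left operand and one for the right, taking all pairwise conjunctions of their conditions, discarding the unsatisfiable ones, and then checking that the resulting data-parameterized actions are data equivalent in the sense of $\simeq$, so that they fall in the common class $[c(e_1,\dots,e_n)]$. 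The covering lemma handles the condition bookkeeping, but one must separately verify that the required agreement of the arguments in $\gD$ is inherited by the matched actions.

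The second, subtler obstacle is the evaluation operator $\eval{\sigma}$, where a relation parameterized by a single shared map is \emph{too} rigid: an assignment transition rewrites the map to $\sigma\mapupd{\sigma(e)}{v}$ on one side and to $\sigma\mapupd{\sigma(e')}{v}$ on the other, and although $e$ and $e'$ agree in $\gD$ the closed terms $\sigma(e)$ and $\sigma(e')$ need not be identical. The correct relation is therefore $R = \set{(\eval{\sigma}(r), \eval{\sigma'}(r')) : (r,r') \in R_1,\; \sigma \approx \sigma'}$, where $\sigma \approx \sigma'$ means that $\sigma$ and $\sigma'$ map the same flexible variables into $\DataVal$ and $\Sat{\gD}{\fol{\sigma(v) = \sigma'(v)}}$ for all $v \in \ProgVar$. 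For this to be a bisimulation I would check that $\approx$-agreement is preserved by assignment updates (which follows because $e$ and $e'$ agree in $\gD$), that applying such maps preserves covering (a substitution instance of a valid implication is valid, so $\set{\sigma(\psi) : \psi \in \Psi}$ covers $\sigma(\phi)$) and preserves data equivalence of actions, and that the evaluated guards $\sigma(\phi)$ and $\sigma'(\phi)$ are equivalent in $\gD$. Establishing these closure-under-$\sigma$ properties, together with the product construction for communication, is where the real work of the proof lies; every remaining case is bookkeeping once the covering lemma is in hand.
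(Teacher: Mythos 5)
Your proposal is correct in substance, but it takes a genuinely different route from the paper. The paper does not construct witnessing bisimulations at all: it reformulates the transition rules, replacing each condition-labelled transition \smash{$\astep{p}{\gact{\phi}{\alpha}}{p'}$} by a family of transitions \smash{$\astep{p}{\gact{\nu}{[\alpha]}}{p'}$} indexed by the valuations $\nu$ satisfying $\phi$ in $\gD$ (and likewise for termination), observes that splitting bisimilarity for the original rules coincides with standard bisimilarity for the reformulated rules, notes that the reformulated rules constitute a transition system specification in path format, and then invokes the standard meta-theorem that bisimilarity is a congruence for any TSS in path format. Your covering lemma is, in effect, the localized version of that valuation-indexed reformulation: where the paper discharges the ``one transition matched by a set of transitions'' difficulty once and for all at the level of the semantics, you discharge it case by case via products of covering sets. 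What the paper's route buys is brevity and delegation of the entire case analysis to known meta-theory, at the cost of leaving unverified both the coincidence of the two bisimilarities and the path-format claim; what your route buys is a self-contained, elementary argument that makes the splitting explicit and correctly isolates the two delicate operators --- communication merge (where data equivalence of the matched parameterized actions and the induced equalities between arguments must be checked) and $\eval{\sigma}$ (where the relation must be widened to pairs $\eval{\sigma}(r)$, $\eval{\sigma'}(r')$ with $\sigma$ and $\sigma'$ agreeing in $\gD$, since assignment steps update the two maps with $\gD$-equal but syntactically distinct closed terms). One small caution on the latter: your relation $\approx$ requires the two maps to send the same flexible variables into $\DataVal$, and this is not automatically preserved by updates $\sigma\mapupd{\sigma(e)}{v}$ versus $\sigma'\mapupd{\sigma'(e')}{v}$ when $e$ and $e'$ are $\gD$-equal but mention different flexible variables under a partial evaluation map; this is an artefact of the paper's loose treatment of partial evaluation rather than a flaw in your strategy, but it deserves a remark if the construction is carried out in full.
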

\begin{proof} 
We can reformulate the transition rules such that:
\begin{itemize}
\item
bisimulation equivalence based on the reformulated transition rules 
according to the standard definition of bisimulation equivalence 
coincides with bisimulation equivalence based on the original transition 
rules according to the definition of bisimulation equivalence given 
above;
\item
the reformulated transition rules make up a transition system 
specification in path format.
\end{itemize}
The reformulation is similar to the one for the transition rules for 
BPAps outlined in~\cite{BB94b}.
The proposition follows now immediately from the well-known result that
bisimulation equivalence according to the standard definition of 
bisimulation equivalence is a congruence if the transition rules 
concerned make up a transition system specification in path format 
(see e.g.~\cite{BV93a}).
\qed
\end{proof}
The underlying idea of the reformulation referred to above is that we 
replace each transition \smash{$\astep{p}{\gact{\phi}{\alpha}}{p'}$} by 
a transition \smash{$\astep{p}{\gact{\nu}{[\alpha]}}{p'}$} for each 
valuation of variables $\nu$ such that $\Sat{\gD}{\phi}\,[\nu]$, and 
likewise \smash{$\isterm{p}{\phi}$}.
Thus, in a bisimulation, a transition on one side must be simulated by a
single transition on the other side.
We did not present the reformulated structural operational semantics
in this paper because it is, in our opinion, intuitively less appealing.  

The axioms of \deACPei\ are sound with respect to $\bisim$ for equations 
between terms from $\ProcTerm$.
\begin{theorem}[Soundness]
\label{theorem-soundness-ACPei}
For all terms $p,q \in \ProcTerm$, $p = q$ is derivable from the axioms
of \deACPei\ only if $p \bisim q$.
\end{theorem}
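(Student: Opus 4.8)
The plan is to exploit that derivability in an equational theory with a conditional axiom is the least relation on $\ProcTerm$ that contains all closed substitution instances of the axioms and is closed under reflexivity, symmetry, transitivity, the congruence rules for the operators, and the deduction rule induced by RSP*. Since $\bisim$ is an equivalence relation (reflexivity via the identity relation, symmetry because the four clauses of the bisimulation definition are posed symmetrically, transitivity by relational composition of witnessing bisimulations) and, by Theorem~\ref{theorem-congruence-ACPei}, a congruence with respect to all operators of \deACPei\ having result sort $\Proc$, it suffices to establish two things: (i) for every \emph{equational} axiom, every closed instance $p = q$ with $p,q \in \ProcTerm$ satisfies $p \bisim q$; and (ii) the rule RSP* preserves $\bisim$, i.e.\ for all $q,r,p \in \ProcTerm$, if $p \bisim q \seqc p \altc r$ then $p \bisim q \iter r$. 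The argument then goes by induction on the structure of the derivation, and I would treat (i) and (ii) separately, doing (i) first.

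For (i) I would dispatch each axiom by exhibiting a witnessing splitting bisimulation, in most cases the relation consisting of the single pair of closed instances together with the identity on all continuations reachable by the transition rules, and then verifying the four clauses of the definition. The purely \ACPei\ axioms (A1--A9, BKS1, the CM and D axioms) are standard: the transitions of the two sides match directly, and the finite covers $\Psi$ can be taken as singletons. For the data-enriched axioms I would pay attention to the extra structure introduced by the operational semantics: the satisfiability side conditions $\nSat{\gD}{\fol{\phi \Land \psi \Liff \False}}$, the interpretation of conditions in $\gD$ (so that IMP1 and IMP2 are sound precisely because $\gD$-equivalent conditions must label the \emph{same} transitions once the $\Psi$-cover is chosen), and the data equivalence $\simeq$ on $\AProcTerm$ (so that the communication axioms CM7Da, CM7Db and the evaluation axioms V2, V3 are matched up to $[\alpha]$ rather than up to syntactic identity of the emitted action, the cover $\Psi$ absorbing the accumulated equalities $e_i = e'_i$). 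The guarded-command axioms GC1--GC11 and the remaining evaluation axioms V0--V5 are routine once the corresponding transition rules and the homomorphic extension of $\sigma$ to conditions and data terms are unfolded.

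The main obstacle is (ii), the soundness of the recursion rule RSP*. Here the idea is that both $p$ and $q \iter r$ are solutions of the behavioural equation $Z \bisim q \seqc Z \altc r$: the hypothesis supplies this for $p$, and BKS1, already shown sound in step (i), gives $q \iter r \bisim q \seqc (q \iter r) \altc r$ for $q \iter r$. The conclusion $p \bisim q \iter r$ would then follow from \emph{uniqueness} of solutions of this equation up to $\bisim$, and it is this uniqueness that I expect to be the genuinely delicate step. I would attempt to construct the witnessing bisimulation directly, pairing each state reachable from $p$ with the corresponding state of the form $u \seqc (q \iter r)$ reachable from $q \iter r$ and discharging the match using the hypothesis. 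The difficulty, and the reason RSP* cannot be replaced by the equational laws BKS2--BKS5 as the paper remarks, is that such a fixed-point argument must control the interaction with the empty process $\ep$ (whether $q$ can return control without performing an action); pinning down exactly this point is where I expect the real work, and the need for care, to be concentrated.
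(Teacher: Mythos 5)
Your proposal follows essentially the same route as the paper's proof: reduce soundness to the individual axioms via the congruence property of $\bisim$ (Theorem~\ref{theorem-congruence-ACPei}), exhibit a witnessing splitting bisimulation for each (instance of an) equational axiom, and handle RSP* by showing that the relation pairing each $r$ with $r \bisim p \seqc r \altc q$ to $p \iter q$, together with the identity, is itself a bisimulation. Your closing observation that the RSP* case hinges on controlling the interaction with $\ep$ (whether the iterated process can pass control without performing an action) correctly isolates the one genuinely delicate point, which the paper's own one-line treatment of RSP* leaves implicit.
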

\begin{proof} 
Because ${\bisim}$ is a congruence, it is sufficient to prove the 
theorem for all substitution instances of each axiom of \deACPei.
We will loosely say that a relation contains all closed substitution 
instances of an equation if it contains all pairs $(p,q)$ such that 
$p = q$ is a closed substitution instance of the equation.

For each axiom, we can construct a bisimulation $R$ witnessing 
$p \bisim q$ for all closed substitution instances $p = q$ of the axiom 
as follows:
\begin{itemize}
\item
in the case of A1--A6, A8, A9, BKS1, CM3, CM4, CM7--CM9, D1, D3, D4, 
GC1, GC4--GC11, V1--V5, CM3D, CM7Da, D1D, CM3A, and D1A,
we take the relation $R$ that consists of all closed substitution 
instances of the axiom concerned and the equation $x = x$;
\item
in the case of A7, CM2T, CM5T, CM6T, D0, D2, GC2, GC3, V0, CM7Db--CM7Dd,
D2D, CM5A, and CM6A,
we take the relation $R$ that consists of all closed substitution 
instances of the axiom concerned;
\item
in the case of CM1T, we take the relation $R$ that consists of all 
closed substitution instances of CM1T, the equation 
$x \parc y = y \parc x$, and the equation $x = x$; 
\item
in the case of RSP*, we take the relation $R$ that consists of all 
closed substitution instances $r = p \iter q$ of the consequent of RSP*
for which $r \bisim p \seqc r \altc q$ and all closed substitution 
instances of the equation $x = x$.
\qed
\end{itemize}
\end{proof}

We have not been able to prove the completeness of the axioms of 
\deACPei\ with respect to $\bisim$ for equations between terms from 
$\ProcTerm$.
Such a proof would give an affirmative answer to an open question about 
the axiomatization of the iteration operator already posed in 1984 by 
Milner~\cite[page~465]{Mil84a}.
Until now, all attempts to answer this question have failed 
(see~\cite{Fok96a}).

\section{A Hoare Logic of Asserted Processes}
\label{sect-HL}

In this section, we present \HL, a Hoare logic of asserted processes 
based on \deACPei, define what it means that an asserted process is 
true, and show that the axioms and rules of this logic are sound with 
respect to this meaning.

We write $\HProcTerm$ for the set of all closed terms of sort $\Proc$ in 
which the evaluation operators $\eval{\sigma}$ and the auxiliary 
operators $\leftm$ and $\commm$ do not occur and we write $\HCondTerm$ 
for the set of all terms of sort $\Cond$ in which variables of sort 
$\Cond$ do not occur.
Clearly, $\HProcTerm \subset \ProcTerm$ and 
$\CondTerm \subset \HCondTerm$.

An \emph{asserted process} is a formula of the form 
$\assproc{\phi}{p}{\psi}$, 
where $p \in \HProcTerm$ and $\phi,\psi \in \HCondTerm$.
Here, $\phi$ is called the \emph{pre-condition} of the asserted process 
and $\psi$ is called the \emph{post-condition} of the asserted process.

The intuitive meaning of an asserted process $\assproc{\phi}{p}{\psi}$
is as follows: if $\phi$ holds at the start of $p$ and $p$ eventually 
terminates successfully, then $\psi$ holds at the successful termination 
of $p$.
The conditions $\phi$ and $\psi$ concern the data values assigned to 
flexible variables at the start and at successful termination, 
respectively.
Therefore, in general, one or more flexible variables occur in $\phi$ 
and~$\psi$.
Unlike in $p$, (logical) variables of sort $\Data$ may also occur in 
$\phi$ and $\psi$. 
This allows of referring in $\psi$ to the data values assigned to 
flexible variables at the start, like in 
$\assproc{v = u}{\ass{v}{v+1}}{v = u + 1}$.

Below, we use the notion of equivalence under $V$-evaluation to make the 
intuitive meaning of asserted processes more precise.

We write $\FVar(p)$, where $p \in \ProcTerm$, for the set of all 
$v \in \ProgVar$ that occur in $p$ and
likewise $\FVar(\phi)$, where $\phi \in \HCondTerm$, for the set of all 
$v \in \ProgVar$ that occur in $\phi$. \linebreak[2]
We write $\FAss(p)$, where $p \in \ProcTerm$, for the set of all 
$v \in \FVar(p)$ that occur in subterms of $p$ that are of the form
$\ass{v}{e}$.
Moreover, we write $\ProcTerm_V$, where $V$ is a finite subset of 
$\ProgVar$, for the set 
$\set{p \in \ProcTerm \where \FVar(p) \subseteq V}$.

Let $V$ be a finite subset of $\ProgVar$ and let $p,q \in \ProcTerm_V$.
Then $p$ and~$q$ are \emph{equivalent under $V\!$-evaluation}, written
$p \evaleqv{V} q$, if, for all $V$-evaluation maps $\sigma$, 
$\eval{\sigma}(p) = \eval{\sigma}(q)$ 
is derivable from the axioms of \deACPei.

Notice that $\evaleqv{V}$, where $V$ be a finite subset of $\ProgVar$, 
is an equivalence relation indeed.
Notice further that, for all $p,q \in \ProcTerm_W$, $W \subset V$ and 
$p \evaleqv{W} q$ only if $p \evaleqv{V} q$.

Let $\assproc{\phi}{p}{\psi}$ be an asserted process and 
let $V = \FVar(\phi) \union \FVar(p) \union \FVar(\psi)$.
Then $\assproc{\phi}{p}{\psi}$ is \emph{true} if, 
for all closed substitution instances $\assproc{\phi'}{p}{\psi'}$ of 
$\assproc{\phi}{p}{\psi}$,
$\phi' \gc p \evaleqv{V} (\phi' \gc p) \seqc (\psi' \gc \ep)$.

To justify the claim that the definition given above reflects the 
intuitive meaning given earlier, we mention that 
$\phi' \gc p \evaleqv{V} (\phi' \gc p) \seqc (\psi' \gc \ep)$
only if, for all $V$-evaluation maps $\sigma$, there exists a 
$V$-evaluation map $\sigma'$ such that
$\eval{\sigma}(\phi' \gc p) \bisim 
 \eval{\sigma}(\phi' \gc p) \seqc \eval{\sigma'}(\psi' \gc \ep)$.

Notice that, using the unary guard operator mentioned in 
Section~\ref{sect-deACPei}, we can write
$\guard{\phi'} \seqc p \evaleqv{V}
 \guard{\phi'} \seqc p \seqc \guard{\psi'}$
instead of
$\phi' \gc p \evaleqv{V} (\phi' \gc p) \seqc (\psi' \gc \ep)$.

Below, we will present the axioms and rules of \HL.
In addition to axioms and rules that concern a particular constant or 
operator of \mbox{\deACPei}, there is a rule concerning auxiliary 
flexible variables and a rule for precondition strengthening and/or 
postcondition weakening.

We use some special terminology and notations with respect to auxiliary
variables.
Let $p \in \HProcTerm$, and let $A \subseteq \FVar(p)$.\pagebreak[2]
Then $A$ is a \emph{set of auxiliary variables of} $p$ if each  
flexible variable in $A$ occurs in $p$ only in subterms of the form 
$\ass{v}{e}$ with $v \in A$.
We write $\AVars(p)$, where $p \in \HProcTerm$, for the set of all sets 
of auxiliary variables of $p$.
Moreover, we write $p_A$, where $p \in \HProcTerm$ and 
$A \in \AVars(p)$, for $p$ with all occurrences of subterms of the form 
$\ass{v}{e}$ with $v \in A$ replaced by $\ep$.

The axioms and rules of \HL\ are given in Table~\ref{hl-deACPei}.
\begin{table}[!t]
\caption{Axioms and rules of \HL}
\label{hl-deACPei}
\begin{druletbl}
{} \\[-3ex]
\text{inaction axiom:}
&
\HAxiom{\assproc{\phi}{\dead}{\psi}}
\\
\text{empty process axiom:}
&
\HAxiom{\assproc{\phi}{\ep}{\phi}}
\\
\text{basic action axiom:}
&
\HAxiom{\assproc{\phi}{a}{\phi}}
\\
\text{data parameterized action axiom:}
&
\HAxiom{\assproc{\phi}{a(e_1,\ldots,e_n)}{\phi}}
\\
\text{assignment axiom:}
&
\HAxiom{\assproc{\phi\subst{e}{v}}{\ass{v}{e}}{\phi}}
\\
\text{alternative composition rule:}
&
\HRule
{\assproc{\phi}{p}{\psi},\; \assproc{\phi}{q}{\psi}}
{\assproc{\phi}{p \altc q}{\psi}}
\\
\text{sequential composition rule:}
&
\HRule
{\assproc{\phi}{p}{\psi},\; \assproc{\psi}{q}{\chi}}
{\assproc{\phi}{p \seqc q}{\chi}}
\\
\text{iteration rule:}
&
\HRule
{\assproc{\phi}{p}{\phi},\; \assproc{\phi}{q}{\psi}}
{\assproc{\phi}{p \iter q}{\psi}}
\\
\text{guarded command rule:}
&
\HRule
{\assproc{\phi \Land \psi}{p}{\chi}}
{\assproc{\phi}{\psi \gc p}{\chi}}
\\
\text{parallel composition rule:}
&
\HRule
{\assproc{\phi}{p}{\psi},\; \assproc{\phi'}{q}{\psi'}}
{\assproc{\phi \Land \phi'}{p \parc q}{\psi \Land \psi'}}
\; \text{premises are disjoint}
\\
\text{encapsulation rule:}
&
\HRule
{\assproc{\phi}{p}{\psi}}
{\assproc{\phi}{\encap{H}(p)}{\psi}}
\\
\text{auxiliary variables rule:}
&
\HRule
{\assproc{\phi}{p}{\psi}}
{\assproc{\phi}{p_A}{\psi}}
\; A \in \AVars(p),\; \FVar(\psi) \inter A = \emptyset
\\
\text{consequence rule:}
&
\HRule
{\Der{}{\phi \Limpl \phi' = \True},\;
 \assproc{\phi'}{p}{\psi'},\;
 \Der{}{\psi' \Limpl \psi = \True}\;}
{\assproc{\phi}{p}{\psi}}
\\[-1.5ex]
\end{druletbl}
\end{table}
In this table, 
$p$ and~$q$ stand for arbitrary terms from $\HProcTerm$,
$\phi$, $\psi$, $\chi$, $\phi'$, and $\psi'$ stand for arbitrary terms 
from $\HCondTerm$,
$a$ stands for an arbitrary basic action from $\Act$,
$v$ stands for an arbitrary flexible variable from $\ProgVar$, and
$e$ and $e_1,e_2,\ldots$ stand for arbitrary terms from $\DataTerm$.
The parallel composition rule may only be applied if the premises are
disjoint.
Premises $\assproc{\phi}{p}{\psi}$ and 
$\assproc{\phi'}{q}{\psi'}$ are \emph{disjoint} if
\begin{itemize}
\item
$\FAss(p) \inter \FVar(q) = \emptyset$,
$\FAss(p) \inter \FVar(\phi') = \emptyset$, and
$\FAss(p) \inter \FVar(\psi') = \emptyset$;
\item
$\FAss(q) \inter \FVar(p) = \emptyset$,
$\FAss(q) \inter \FVar(\phi) = \emptyset$, \hspace*{.003em} and
$\FAss(q) \inter \FVar(\psi) = \emptyset$.
\end{itemize}
In the consequence rule, the first premise and the last premise are not 
asserted processes.
They assert that $\phi \Limpl \phi' = \True$ and 
$\psi' \Limpl \psi = \True$ are derivable from the axioms of \deACPei.

Before we move on to the soundness of the axioms and rules of \HL, we
consider two congruence related properties of the equivalences 
\smash{$\evaleqv{V}$} that are relevant to the soundness proof.
 
\begin{theorem}[Congruence]
\label{theorem-congruence-evaleqv}
For all finite $V \subseteq \ProgVar$, 
for all terms $p,q,p',q' \in \ProcTerm_V$, 
$p \evaleqv{V} p'$ and $q \evaleqv{V} q'$ only if 
$p \altc q \evaleqv{V} p' \altc q'$, 
$p \seqc q \evaleqv{V} p' \seqc q'$, and
$p \iter q \evaleqv{V} p' \iter q'$. 
Moreover, for all finite $V \subseteq \ProgVar$, 
for all terms $p,p' \in \ProcTerm_V$ 
and all terms $\phi \in \HCondTerm$ with $\FVar(\phi) \subseteq V$, 
$p \evaleqv{V} p'$ only if $\phi \gc p \evaleqv{V} \phi \gc p'$ and
$\encap{H}(p) \evaleqv{V} \encap{H}(p')$.
\end{theorem}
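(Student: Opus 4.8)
The plan is to unfold the definition of $\evaleqv{V}$ and reduce each of the required closure properties to the derivability, for every $V$-evaluation map $\sigma$, of an equation $\eval{\sigma}(\cdots) = \eval{\sigma}(\cdots)$. The whole argument rests on the evaluation axioms V0--V5, which prescribe how $\eval{\sigma}$ propagates through the \emph{head} of a process, together with Lemma~\ref{lemma-HNF}, which lets me assume the processes are in head normal form before those axioms are applied. Two of the cases are then immediate. For alternative composition, V4 gives $\eval{\sigma}(p \altc q) = \eval{\sigma}(p) \altc \eval{\sigma}(q)$, so the hypotheses $\eval{\sigma}(p) = \eval{\sigma}(p')$ and $\eval{\sigma}(q) = \eval{\sigma}(q')$ at once yield $\eval{\sigma}(p \altc q) = \eval{\sigma}(p' \altc q')$. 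For the guarded command, V5 gives $\eval{\sigma}(\phi \gc p) = \sigma(\phi) \gc \eval{\sigma}(p)$, and since the same $\sigma$ and the same $\phi$ (with $\FVar(\phi) \subseteq V$) occur on both sides, $\eval{\sigma}(p) = \eval{\sigma}(p')$ finishes it.

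The engine for the remaining cases is a \emph{decomposition lemma} for sequential composition, proved by bringing $p$ into head normal form and then applying V0--V5 summand by summand (using GC5, A5 and A9 to turn $(\phi \gc \alpha \seqc r) \seqc q$ into $\phi \gc \alpha \seqc (r \seqc q)$). The lemma states that $\eval{\sigma}(p \seqc q)$ is derivably equal to the term obtained from a head normal form of $\eval{\sigma}(p)$ by replacing each successfully terminating leaf, reached in some state $\tau$, by $\eval{\tau}(q)$. The crucial bookkeeping point is that V3 rewrites $\eval{\sigma}(\ass{v}{e} \seqc x)$ to $\assd{v}{\sigma(e)} \seqc \eval{\sigma\mapupd{\sigma(e)}{v}}(x)$, so the states $\tau$ at which $q$ is evaluated arise from $\sigma$ by updates with data values; because $p \in \ProcTerm_V$ assigns only to variables in $V$, every such $\tau$ is again a $V$-evaluation map, and therefore the hypothesis $q \evaleqv{V} q'$ applies to all of them. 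I would then split the sequential case into a continuation-change step (replacing $q$ by $q'$, which touches only the leaf family $\{\eval{\tau}(q)\}$) and a prefix-change step (replacing $p$ by $p'$, for which $\eval{\sigma}(p) = \eval{\sigma}(p')$ makes the two decompositions coincide), and compose them by transitivity of $\evaleqv{V}$. The encapsulation case I would reduce to the auxiliary derivable identity $\eval{\sigma}(\encap{H}(r)) = \encap{H}(\eval{\sigma}(r))$ (using D3, GC11, D0, D4 and, for the actions, D1, D2, D1D, D2D, D1A, noting that $\encap{H}$ leaves assignment actions and hence state updates untouched); once this identity is available, $\eval{\sigma}(\encap{H}(p)) = \encap{H}(\eval{\sigma}(p)) = \encap{H}(\eval{\sigma}(p')) = \eval{\sigma}(\encap{H}(p'))$ follows since derivable equality is itself a congruence.

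The main obstacle is iteration, and it is also the reason the decomposition lemma and the encapsulation identity cannot be obtained by a naive structural induction: the head normal form of $x \iter y$ has continuations of the form $x' \seqc (x \iter y)$, so the continuations do not decrease, and moreover the loop body $p$ may change the evaluation state on each pass, so $\eval{\sigma}(p \iter q)$ is in general \emph{not} of the form $X \iter Y$ with a fixed body. My plan here is to use BKS1 together with the sequential decomposition lemma to exhibit the families $(\eval{\sigma}(p \iter q))_\sigma$ and $(\eval{\sigma}(p' \iter q'))_\sigma$ as solutions of one and the same recursive specification: since $\eval{\sigma}(p) = \eval{\sigma}(p')$ for all $\sigma$, the replacement operation produced by the decomposition of $\eval{\sigma}(p \seqc (p \iter q))$ is literally the same as the one for $p'$, and the exit summand $\eval{\sigma}(q)$ equals $\eval{\sigma}(q')$. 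RSP* is the instrument that turns ``satisfies the same guarded recursion'' into ``is the same process''. The delicate part, which I expect to be the hard step, is matching this state-indexed recursion to the single binary-star shape to which RSP* directly applies: I must discharge guardedness by first stripping any immediate-termination summands of the body (via BKS5, $\ep \iter x = x$) so that the recursive occurrence is genuinely preceded by an action, and I must argue that, although the reachable evaluation maps may be infinite and genuinely varying, the uniqueness supplied by RSP* still pins the two families down to a common value for each $\sigma$. This is presumably exactly the role of RSP* flagged in the remark after Table~\ref{deriv-eqns-ACPei}, and is why the weaker laws BKS2--BKS5 would not suffice.
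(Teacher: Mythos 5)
Your overall route is the paper's: the cases of alternative composition and guarded command are immediate from V4 and V5, and the remaining cases are handled by bringing the first argument into head normal form (Lemma~\ref{lemma-HNF}), pushing $\eval{\sigma}$ through with V0--V5, and observing that the evaluation maps produced by V3 are again $V$-evaluation maps because $\FAss(p) \subseteq \FVar(p) \subseteq V$. The paper compresses all of this into ``induction on the number of proper subprocesses of $p$, where use is made of Lemma~\ref{lemma-HNF}''; your decomposition lemma for $\seqc$ and your commutation identity $\eval{\sigma}(\encap{H}(r)) = \encap{H}(\eval{\sigma}(r))$ are reasonable ways of organizing that induction, and your bookkeeping of the reachable evaluation maps is the right key observation.

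The gap is exactly where you place it: iteration. RSP* is a uniqueness principle for a \emph{single} equation $z = x \seqc z \altc y$, whereas the family $(\eval{\sigma}(p \iter q))_{\sigma}$ satisfies a \emph{system} of equations in which the recursive occurrence on the right-hand side of the $\sigma$-equation is $\eval{\sigma'}(p \iter q)$ for a possibly different $\sigma'$. ``The uniqueness supplied by RSP* still pins the two families down'' is not something RSP* gives you: \deACPei\ has no recursion principle for systems of equations, and, as you yourself note, $\eval{\sigma}(p \iter q)$ cannot in general be rewritten as $X \iter Y$ with a fixed body. The paper's way out --- visible in the analogous iteration case in the proof of Theorem~\ref{theorem-soundness-HL} --- is a case split inside the induction on the number of proper subprocesses: when a pass through the body leaves the evaluation map unchanged ($\sigma' = \sigma$), the system collapses to a single equation of the RSP* shape and RSP* applies directly; when $\sigma' \neq \sigma$, the recursive occurrence is discharged by the induction hypothesis together with BKS1, not by RSP*. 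You need this case split (or an equivalent device) to close the iteration case; as written, your plan stalls precisely at the step you flag as ``delicate''.
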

\begin{proof} 
Assume $p \evaleqv{V} p'$ and $q \evaleqv{V} q'$.
Then $p \altc q \evaleqv{V} p' \altc q'$ follows immediately and 
$p \seqc q \evaleqv{V} p' \seqc q'$ and 
$p \iter q \evaleqv{V} p' \iter q'$ 
follow easily by induction on the number of proper subprocesses of $p$, 
where use is made of Lemma~\ref{lemma-HNF}.
Assume $p \evaleqv{V} p'$.
Then $\phi \gc p \evaleqv{V} \phi \gc p'$ follows immediately and 
$\encap{H}(p) \evaleqv{V} \encap{H}(p')$ follows easily by induction on 
the number of proper subprocesses of $p$, where use is made of 
Lemma~\ref{lemma-HNF}.
\qed
\end{proof}

\begin{theorem}[Limited Congruence]
\label{theorem-congruence-evaleqv-parc}
\sloppy
For all finite $V \subseteq \ProgVar$,
for all terms $p,q,p',q' \in \ProcTerm_V$ with
$\FAss(p) \inter \FVar(q) = \emptyset$,
$\FAss(q) \inter \FVar(p) = \emptyset$,
$\FAss(p') \inter \FVar(q') = \emptyset$, and 
$\FAss(q') \inter \FVar(p') = \emptyset$,
$p \evaleqv{V} p'$ and $q \evaleqv{V} q'$ only if 
$p \parc q \evaleqv{V} p' \parc q'$. 
\end{theorem}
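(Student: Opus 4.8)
The plan is to factor the theorem through a single distribution property of the evaluation operators over parallel composition, and then to finish with a short chain of rewrites. As a preparatory step I would isolate an \emph{irrelevance} lemma: for every $p \in \ProcTerm$, every $v \in \ProgVar$ with $v \notin \FVar(p)$, every $V$-evaluation map $\sigma$, and every $d \in \DataVal$, the equation $\eval{\sigma\mapupd{d}{v}}(p) = \eval{\sigma}(p)$ is derivable, and likewise for the homomorphic extensions of $\sigma$ to $\DataTerm$ and $\CondTerm$. This is a routine structural induction on $p$ using V0--V5 together with IMP1 and IMP2, and it is precisely the formal content of ``updating a variable that does not occur changes nothing''.

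The heart of the argument is the claim that, under the hypotheses $\FAss(p) \inter \FVar(q) = \emptyset$ and $\FAss(q) \inter \FVar(p) = \emptyset$, the equation
\[
\eval{\sigma}(p \parc q) = \eval{\sigma}(p) \parc \eval{\sigma}(q)
\]
is derivable for every $V$-evaluation map $\sigma$, where $p,q \in \ProcTerm_V$. I would prove this by the same kind of induction as in the proof of Theorem~\ref{theorem-congruence-evaleqv}, using Lemma~\ref{lemma-HNF} to bring $p$ and $q$ into head normal form and expanding $p \parc q$ with CM1T into the two left merges, the communication merge, and the termination summand $\encap{\Act}(p) \seqc \encap{\Act}(q)$. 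In a left-merge summand of $p$ with leading guard and ordinary or data-parameterized action, GC8 and CM3/CM3D push $\eval{\sigma}$ inward without altering $\sigma$, and the induction hypothesis applies to the continuation in parallel with $q$. In a summand with a leading assignment $\ass{v}{e}$, GC8, CM3A and V3 produce a prefix $\ass{v}{\sigma(e)}$ followed by $\eval{\sigma'}$ of the continuation in parallel with $q$, where $\sigma' = \sigma\mapupd{\sigma(e)}{v}$ is again a $V$-evaluation map since $v \in \FAss(p) \subseteq V$; here $v \notin \FVar(q)$ by disjointness, so the irrelevance lemma gives $\eval{\sigma'}(q) = \eval{\sigma}(q)$, which is exactly what matches the right-hand side after it is expanded by CM1T, CM3A and the $V$-axioms. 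The left merges from $q$ are symmetric and use the second disjointness hypothesis. For the communication merge, CM5A and CM6A guarantee that assignments never synchronise, so synchronisation combines only genuine actions and causes no state change; parameter matching is handled by CM7Da--CM7Dd and GC9--GC11, with no disjointness needed. The termination summand is treated by the same mechanism, pushing $\eval{\sigma}$ through $\encap{\Act}$ via GC11, D0--D4, D1D/D2D, D1A and V0--V5, and invoking the irrelevance lemma at any surviving assignment prefixes.

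With the distribution property established, the theorem follows at once: for every $V$-evaluation map $\sigma$,
\[
\eval{\sigma}(p \parc q) = \eval{\sigma}(p) \parc \eval{\sigma}(q) = \eval{\sigma}(p') \parc \eval{\sigma}(q') = \eval{\sigma}(p' \parc q'),
\]
where the two outer equalities are the distribution property applied to $p,q$ and to $p',q'$ respectively (this is why all four disjointness hypotheses are assumed), and the middle equality uses $p \evaleqv{V} p'$ and $q \evaleqv{V} q'$ together with the substitutivity of derivable equality in the $\parc$-context.

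I expect the main obstacle to lie in the distribution lemma, and specifically in two points. First, the bookkeeping that keeps both disjointness invariants stable along the induction: one must verify that every continuation arising from an unfolding step still satisfies $\FAss(\cdot) \inter \FVar(\cdot) = \emptyset$ with its partner. Second, and more delicate, is the iteration/recursion case, where CM1T and BKS1 unfold $p_1 \iter p_2$ into a continuation that mentions $p_1 \iter p_2$ again, so the recursion must be closed either by the induction-on-the-body-with-head-normal-forms scheme of Theorem~\ref{theorem-congruence-evaleqv} or, if that does not suffice, by an appeal to RSP*; the termination summand $\encap{\Act}(p) \seqc \encap{\Act}(q)$ is awkward for the same reason, since assignment prefixes survive encapsulation (D1A) and so must be reconciled through the irrelevance lemma rather than simply blocked.
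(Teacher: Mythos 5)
Your argument is correct, and it reaches the paper's conclusion by a more modular route than the paper's own (very terse) proof. The paper simply asserts that $p \parc q \evaleqv{V} p' \parc q'$ ``follows easily by induction on the number of proper subprocesses of $p$, where use is made of Lemma~\ref{lemma-HNF}'' --- i.e.\ it envisages a single simultaneous induction that matches the expansions of $\eval{\sigma}(p \parc q)$ and $\eval{\sigma}(p' \parc q')$ summand by summand, invoking the hypotheses $p \evaleqv{V} p'$ and $q \evaleqv{V} q'$ at each unfolding step. You instead factor the statement through the standalone distribution identity $\eval{\sigma}(p \parc q) = \eval{\sigma}(p) \parc \eval{\sigma}(q)$, proved once per side of the theorem, and splice the two instances together with ordinary substitutivity of derivable equality under $\parc$. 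What this buys is transparency: the distribution lemma is exactly where the disjointness hypotheses do their work (and your irrelevance lemma isolates the precise fact --- updating a variable foreign to a term leaves its evaluation unchanged --- that makes non-interference usable inside the induction), and the final three-step chain makes it obvious why all four disjointness conditions are needed, which the paper's phrasing leaves implicit. The underlying machinery is the same as the paper's (head normal forms via Lemma~\ref{lemma-HNF}, case analysis over the CM/GC/V axioms, induction on proper subprocesses), and you correctly flag the two points the paper glosses over: the iteration case, where the continuation re-mentions $p_1 \iter p_2$ and the loop must be closed by RSP* or a careful choice of induction measure (the paper itself concedes, in the context of Theorem~\ref{theorem-soundness-HL}, that RSP* is indispensable for exactly this kind of step), and the termination summand $\encap{\Act}(p) \seqc \encap{\Act}(q)$, where assignment prefixes survive encapsulation by D1A and must be discharged through the irrelevance lemma. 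One small bookkeeping point worth making explicit in your write-up: the updated map $\sigma\mapupd{\sigma(e)}{v}$ is again a $V$-evaluation map only because $\FVar(e) \subseteq \FVar(p) \subseteq V$ guarantees $\sigma(e) \in \DataVal$; you state the conclusion but not this reason.
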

\begin{proof} 
\sloppy
Assume 
$\FAss(p) \inter \FVar(q) = \emptyset$ and
$\FAss(q) \inter \FVar(p) = \emptyset$, 
$\FAss(p') \inter\linebreak[2] \FVar(q') = \emptyset$ and
$\FAss(q') \inter \FVar(p') = \emptyset$, 
$p \evaleqv{V} p'$ and $q \evaleqv{V} q'$.
Then $p \parc q \evaleqv{V} p' \parc q'$ follows easily by induction on 
the number of proper subprocesses of $p$, where use is made of 
Lemma~\ref{lemma-HNF}.
\qed
\end{proof}

\begin{theorem}[Soundness]
\label{theorem-soundness-HL}
For all terms $p \in \HProcTerm$, 
for all terms $\phi,\psi \in \HCondTerm$, 
the asserted process $\assproc{\phi}{p}{\psi}$ is derivable from the 
axioms and rules of \HL\ only if $\assproc{\phi}{p}{\psi}$ is true.
\end{theorem}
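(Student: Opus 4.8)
The plan is to proceed by induction on the structure of the derivation of $\assproc{\phi}{p}{\psi}$ in \HL, showing that every axiom is true and that every rule preserves truth. Unwinding the definition of truth, in each case it suffices to fix an arbitrary closed substitution instance and to establish, for that instance and every $V$-evaluation map $\sigma$, the derivability of the defining equation $\eval{\sigma}(\cdot) = \eval{\sigma}(\cdot)$; below I suppress the substitution and write the conditions of the instance again as $\phi,\psi,\ldots$. The two congruence results, Theorems~\ref{theorem-congruence-evaleqv} and~\ref{theorem-congruence-evaleqv-parc}, together with the head normal form Lemma~\ref{lemma-HNF}, are the main tools. One recurring bookkeeping point is that the $V$ attached to a premise is generally smaller than the one attached to the conclusion; here I would use the observation recorded just before the definition of truth, that $\evaleqv{W}$ implies $\evaleqv{V}$ whenever $W \subseteq V$, so that the induction hypotheses transfer to the larger $V$.

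For the base cases I would verify the axioms directly. The inaction axiom is immediate from GC3 and A7, since both sides reduce to $\dead$, and the empty process axiom from GC5, A9 and GC6, using $\phi \Land \phi = \phi$ (IMP2). The basic action, data parameterized action, and assignment axioms are where the evaluation axioms V0--V5 enter. For the two action axioms the point is that performing an action leaves every flexible variable untouched (V1, V2), so $\sigma$ is unchanged and the guard $\phi$ evaluates identically before and after the action, making the post-guard $\phi \gc \ep$ redundant. The assignment axiom is the classical Hoare assignment rule; its soundness rests on the substitution lemma $\sigma(\phi\subst{e}{v}) = \sigma\mapupd{\sigma(e)}{v}(\phi)$, which says that the updated evaluation map produced by V3 evaluates $\phi$ to exactly the value the precondition guard $\phi\subst{e}{v}$ already had, so that here too the post-guard can be dropped.

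For the inductive cases I would treat most structural rules uniformly by means of Theorem~\ref{theorem-congruence-evaleqv}. The alternative composition rule follows by distributing the guard with GC4 and applying congruence of $\altc$ and $\seqc$; the sequential composition rule by chaining the two hypotheses through congruence of $\seqc$ and transitivity; the guarded command rule by collapsing the two guards with GC6; and the encapsulation rule directly from congruence of $\encap{H}$. The consequence rule is obtained by first strengthening the precondition, writing $\phi \gc p = \phi \gc (\phi' \gc p)$ via GC6 and $\phi \Land \phi' = \phi$, and then weakening the postcondition by post-composing the resulting equivalence with $\psi \gc \ep$ and simplifying $(\psi' \gc \ep) \seqc (\psi \gc \ep)$ to $\psi' \gc \ep$ through A5, GC5, A9, GC6 and $\psi' \Land \psi = \psi'$; both steps turn the derivable implications into identities on conditions by IMP2. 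The parallel composition rule is more delicate: here I would use the guard-distribution axioms GC8--GC11 together with CM1T to relate $(\phi \Land \phi') \gc (p \parc q)$ to the guarded components $\phi \gc p$ and $\phi' \gc q$, combine the two evaluated hypotheses by Theorem~\ref{theorem-congruence-evaleqv-parc}, whose disjointness hypotheses are furnished exactly by the disjointness side condition of the rule, and read off the conjoined postcondition from the fact that the two components assign to disjoint sets of flexible variables. The auxiliary variables rule requires a separate locality lemma: since the variables in $A$ occur in $p$ only inside their own assignments, erasing those assignments does not affect the evaluation of any other subterm, so $\eval{\sigma}(p_A)$ and $\eval{\sigma}(p)$ agree on everything seen by $\phi$ and $\psi$ once $\FVar(\psi) \inter A = \emptyset$.

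The iteration rule is the \emph{main obstacle}. Given the invariant hypothesis $\phi \gc p \evaleqv{V} (\phi \gc p) \seqc (\phi \gc \ep)$ and the exit hypothesis $\phi \gc q \evaleqv{V} (\phi \gc q) \seqc (\psi \gc \ep)$, I would first unfold once with BKS1, GC4 and GC5 to obtain $\phi \gc (p \iter q) = (\phi \gc p) \seqc (p \iter q) \altc (\phi \gc q)$, then use the invariant hypothesis to re-establish the guard $\phi$ at the recursive occurrence, so that the right-hand side becomes a genuine star equation in $\phi \gc (p \iter q)$, and attach $\psi \gc \ep$ in the exit branch by the exit hypothesis. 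The target equivalence $\phi \gc (p \iter q) \evaleqv{V} (\phi \gc (p \iter q)) \seqc (\psi \gc \ep)$ would then follow by exhibiting both sides as solutions of one and the same star equation and invoking RSP* for uniqueness. The difficulty is that re-establishing the invariant is sound only after evaluation, while the evaluation map is altered by the assignments inside $p$; the argument must therefore be carried out per $V$-evaluation map, and the (in general infinitely many) reachable data states have to be packaged into a single application of RSP*. This is precisely the point at which RSP* is indispensable and BKS2--BKS5 would not suffice, in accordance with the remark following the theorem.
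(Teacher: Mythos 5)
Your overall strategy---induction on the derivation, reduction to closed substitution instances, the congruence Theorems~\ref{theorem-congruence-evaleqv} and~\ref{theorem-congruence-evaleqv-parc} as the workhorses for the structural rules, and RSP* for iteration---is exactly the paper's, and your treatment of the axioms and of the non-iteration rules matches what the paper dispatches as trivial or as straightforward inductions on the number of proper subprocesses using Lemma~\ref{lemma-HNF}.

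The genuine gap is in the iteration rule, and you have put your finger on it yourself without closing it. One unfolding with BKS1 gives, for a given $V$-evaluation map $\sigma$,
\[
\eval{\sigma}(\phi \gc (p \iter q)) =
\eval{\sigma}(\phi \gc p) \seqc \eval{\sigma'}(\phi \gc (p \iter q)) \altc \eval{\sigma}(\phi \gc q)
\]
for some evaluation map $\sigma'$ determined by the assignments in $p$. This is a fixed-point equation in $\eval{\sigma}(\phi \gc (p \iter q))$ only when $\sigma' = \sigma$; RSP* is a single-equation principle, so ``packaging the (in general infinitely many) reachable data states into a single application of RSP*'' is not an available move, and as written your plan stalls exactly at the point you flag as the difficulty. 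The paper's resolution is an inner induction on the number of proper subprocesses of $\eval{\sigma}(\phi \gc (p \iter q))$ together with a case split on whether $\sigma' = \sigma$. If $\sigma' \neq \sigma$, the recursive occurrence $\eval{\sigma'}(\phi \gc (p \iter q))$ is a proper subprocess, so the inner induction hypothesis already yields the desired equation for it, and the conclusion for $\sigma$ follows from the exit hypothesis and BKS1 with no fixed-point reasoning at all. Only in the case $\sigma' = \sigma$ is RSP* invoked, to obtain $\eval{\sigma}(\phi \gc (p \iter q)) = \eval{\sigma}((\phi \gc p) \iter (\phi \gc q))$, after which the exit hypothesis and BKS1 finish. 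You need some such mechanism---your sketch supplies the right tool but not the argument that makes it applicable.
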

\begin{proof} 
We will assume that $\phi,\psi \in \CondTerm$.
We can do so without loss of generality because, by the definition of
the truth of asserted processes, it is sufficient to consider arbitrary 
closed substitution instances of $\phi$ and $\psi$ if 
$\phi,\psi \notin \CondTerm$.
We will prove the theorem by proving that each of the axioms is true and
each of the rules is such that only true conclusions can be drawn from 
true premises.   
The theorem then follows by induction on the length of the proof.

The proofs for the axioms and the consequence rule are trivial.
Theorems~\ref{theorem-congruence-evaleqv}
and~\ref{theorem-congruence-evaleqv-parc} 
facilitate the proofs for the other rules.
By these theorems, the proofs for the alternative composition rule, the 
sequential composition rule, and the guarded command rule are also 
trivial and the proofs for the parallel composition rule, the 
encapsulation rule, and the auxiliary variables rule are straightforward 
proofs by induction on the number of proper subprocesses, in which use 
is made of Lemma~\ref{lemma-HNF}.
The parallel composition rule is proved simultaneously with similar 
rules for the left merge operator and the communication merge operator.
The proof for the iteration rule goes in a less straightforward way.

In case of the iteration rule, we assume that
\begin{itemize} 
\item[(1)]
for all $V$-evaluation maps $\sigma$,
$\eval{\sigma}(\phi \gc p) =
 \eval{\sigma}((\phi \gc p) \seqc (\phi \gc \ep))$ is derivable;
\item[(2)]
for all $V$-evaluation maps $\sigma$,
$\eval{\sigma}(\phi \gc q) =
 \eval{\sigma}((\phi \gc q) \seqc (\psi \gc \ep))$ is derivable;
\end{itemize}
and we prove that
\begin{itemize} 
\item[(3)]
for all $V$-evaluation maps $\sigma$,
$\eval{\sigma}(\phi \gc (p \iter q)) =
 \eval{\sigma}((\phi \gc (p \iter q)) \seqc (\psi \gc \ep))$ 
is derivable;
\end{itemize}
where $V = \FVar(\phi) \union \FVar(p \iter q) \union \FVar(\psi)$.
We do so by induction on the number of proper subprocesses of
$\eval{\sigma}(\phi \gc (p \iter q))$.

The basis step is trivial.
The inductive step is proved in the following way.
It follows easily from assumption~(1), making use of BKS1, that
\begin{itemize} 
\item[(4)]
for all $V$-evaluation maps $\sigma$,
for some evaluation map $\sigma'$,
$\eval{\sigma}(\phi \gc (p \iter q)) =
 \eval{\sigma}(\phi \gc p) \seqc
 \eval{\sigma'}(\phi \gc (p \iter q)) \altc
 \eval{\sigma}(\phi \gc q)$ 
is derivable.
\end{itemize}
We distinguish two cases: $\sigma \neq \sigma'$ and $\sigma = \sigma'$.

In the case where $\sigma \neq \sigma'$, (3) follows easily from~(4), 
the induction hypothesis, and assumption~(2), making use of BKS1.

In the case where $\sigma = \sigma'$, it follows immediately 
from~(4), making use of RSP*, that
\begin{itemize} 
\item[(5)]
for all $V$-evaluation maps $\sigma$,
$\eval{\sigma}(\phi \gc (p \iter q)) =
 \eval{\sigma}((\phi \gc p) \iter (\phi \gc q))$ 
is derivable;
\end{itemize}
and (3) follows easily from~(5) and assumption~(2), making use of BKS1.
\qed
\end{proof}
In the proof of Theorem~\ref{theorem-soundness-HL}, RSP* is used in the 
part concerning the iteration rule.
We do not see how that part of  the proof can be done if RSP* is 
replaced by BKS2--BKS5.

The following is a corollary of the definition of the truth of asserted 
processes and Theorem~\ref{theorem-soundness-HL}.
\begin{corollary}
\label{corollary-soundness}
For all terms $p, p' \in \HProcTerm$, 
for all terms $\phi,\psi \in \HCondTerm$, 
the asserted process $\assproc{\phi}{p}{\psi}$ is derivable from the 
axioms and rules of \HL\ and $p = p'$ is derivable from the axioms of
\deACPei\ only if $\assproc{\phi}{p'}{\psi}$ is true.
\end{corollary}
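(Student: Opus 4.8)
The plan is to derive the corollary directly from the definition of truth for asserted processes together with Theorem~\ref{theorem-soundness-HL}, using the fact that derivable equations in \deACPei\ are preserved under the evaluation operators. First I would invoke Theorem~\ref{theorem-soundness-HL} to conclude from the derivability of $\assproc{\phi}{p}{\psi}$ in \HL\ that $\assproc{\phi}{p}{\psi}$ is true. By the definition of truth, this means that for $V = \FVar(\phi) \union \FVar(p) \union \FVar(\psi)$ and for every closed substitution instance $\assproc{\phi'}{p}{\psi'}$, we have $\phi' \gc p \evaleqv{V} (\phi' \gc p) \seqc (\psi' \gc \ep)$, i.e.\ $\eval{\sigma}(\phi' \gc p) = \eval{\sigma}((\phi' \gc p) \seqc (\psi' \gc \ep))$ is derivable for all $V$-evaluation maps $\sigma$.

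Next I would use the hypothesis that $p = p'$ is derivable from the axioms of \deACPei. Since the axioms include congruence of equality with respect to all operators (equational logic), the guarded command operator and the evaluation operators respect this equality: from $p = p'$ we obtain derivability of $\phi' \gc p = \phi' \gc p'$ and hence of $\eval{\sigma}(\phi' \gc p) = \eval{\sigma}(\phi' \gc p')$, and similarly $\eval{\sigma}((\phi' \gc p) \seqc (\psi' \gc \ep)) = \eval{\sigma}((\phi' \gc p') \seqc (\psi' \gc \ep))$, for every $\sigma$. Chaining these with the equations from the previous paragraph by transitivity and symmetry yields that $\eval{\sigma}(\phi' \gc p') = \eval{\sigma}((\phi' \gc p') \seqc (\psi' \gc \ep))$ is derivable for all $V$-evaluation maps~$\sigma$.

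The one technical point I would be careful about is the evaluation set $V$. The truth of $\assproc{\phi}{p}{\psi}$ is stated relative to $V = \FVar(\phi) \union \FVar(p) \union \FVar(\psi)$, whereas truth of $\assproc{\phi}{p'}{\psi}$ requires the relation $\evaleqv{V'}$ with $V' = \FVar(\phi) \union \FVar(p') \union \FVar(\psi)$. Since $p = p'$ is derivable and the axioms of \deACPei\ contain no equation that can remove a genuinely occurring flexible variable in a way that would break the relevant inclusions, I would handle this by appealing to the monotonicity remark already recorded in the excerpt, namely that for $W \subseteq V$, $p \evaleqv{W} q$ only if $p \evaleqv{V} q$. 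Taking $W = V \union V'$ (or simply enlarging to the union of both variable sets), the equivalence established above lifts to the required set, so no loss of precision occurs.

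Thus the argument is essentially a routine splice of soundness with substitutivity of derivable equality under $\gc$ and $\eval{\sigma}$, and I expect no real obstacle. The only place demanding a moment's attention is the bookkeeping of the evaluation set $V$ across the switch from $p$ to $p'$; this is dispatched by the monotonicity property of $\evaleqv{V}$ in $V$ noted just before the definition of truth. Everything else follows immediately by transitivity of derivable equality in \deACPei\ together with Theorem~\ref{theorem-soundness-HL}.
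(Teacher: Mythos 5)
Your overall route is exactly the paper's: the paper offers no separate proof, presenting the corollary as an immediate consequence of the definition of truth (a family of derivable equations $\eval{\sigma}(\phi' \gc p) = \eval{\sigma}((\phi' \gc p) \seqc (\psi' \gc \ep))$), Theorem~\ref{theorem-soundness-HL}, and the substitutivity of derivable equality in \deACPei\ under $\gc$, $\seqc$, and $\eval{\sigma}$. The first two paragraphs of your proposal are the intended argument and are correct.

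The one step that does not go through as written is precisely the bookkeeping point you single out. The monotonicity recorded in the paper runs from smaller to larger sets: for $p,q \in \ProcTerm_W$ with $W \subseteq V$, $p \evaleqv{W} q$ only if $p \evaleqv{V} q$. Enlarging to $V \union V'$ therefore yields the equations for $(V \union V')$-evaluation maps, whereas the truth of $\assproc{\phi}{p'}{\psi}$ is defined relative to $V'$-evaluation maps, and $V'$ may be a proper subset of $V \union V'$. The sets $\FVar(p)$ and $\FVar(p')$ can genuinely differ in both directions even though $p = p'$ is derivable (for instance $p' \equiv p \altc (\False \gc \ass{v}{0})$ with $v$ fresh adds $v$ to the variable set), and when $V \not\supseteq V'$ the hypothesis of the monotonicity remark, namely that both terms lie in $\ProcTerm_V$, fails, so the remark cannot even be invoked in the direction you need. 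What actually closes this step is the (true, but nowhere recorded in the paper) observation that for a closed term $q$ the term $\eval{\sigma}(q)$ is, up to derivable equality, determined by the restriction of $\sigma$ to $\FVar(q)$; this is proved by induction using Lemma~\ref{lemma-HNF}. The gap is minor and fixable, and the paper itself silently elides it, but since it is the only non-routine point in the proof it should be settled by such a lemma rather than by an appeal to a monotonicity property that points the wrong way.
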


If it is possible at all, equational reasoning with the axioms of a 
process algebra about how data change in the course of a process is 
often rather cumbersome.
In many cases, but not all, reasoning with the axioms and rules of a 
Hoare logic is much more convenient.
We have not strived for a Hoare logic that covers the cases where it 
does not simplify reasoning.
Actually, the axioms and rules of \HL\ are not complete (in the sense of 
Cook~\cite{Coo78a}).
The side condition of the parallel composition rule precludes 
completeness.
We have, for example, that the asserted process
$\assproc{i = 0}{\ass{i}{i+1} \seqc \ass{i}{i+1} \parc \ass{i}{0}}
  {i = 0 \Lor i = 1 \Lor i = 2}$ 
is true, but this cannot be derived by means of the axioms and rules of
\HL\ alone because a premise of the form 
$\assproc{\phi}{\ass{i}{i+1} \seqc \ass{i}{i+1}}{\psi}$ and a premise of
the form $\assproc{\phi'}{\ass{i}{0}}{\psi'}$ are never disjoint.

We could have replaced the disjointness side condition by an 
interference-freedom side condition to cover cases such as the example 
given above and perhaps this would lead to completeness.
However, unless the disjointness side condition would suffice, 
fulfillment of the interference-freedom side condition generally needs a 
sophisticated proof.
These interference-freedom proofs partly outweigh the advantage of using 
a Hoare logic for reasoning about how data change in the course of a 
process.
As will be shown by means of an example in Section~\ref{sect-discussion}, 
equational reasoning with the axioms of \deACPei\ offers an alternative 
without interference-freedom proofs.
That is why we have chosen for the parallel composition rule with the 
disjointness side condition.

\section{On the Connection between the Hoare Logic and \deACPei}
\label{sect-using-HL}

In this section, we go into the connection of \HL\ with \deACPei\ by way
of the equivalence relations $\evaleqv{V}$.

Let $\assproc{\phi}{p}{\psi}$ be an asserted process, and 
let $V = \FVar(\phi) \union \FVar(p) \union \FVar(\psi)$.
Suppose that $\assproc{\phi}{p}{\psi}$ has been derived from the axioms 
and rules of \HL.
Then, by Theorem~\ref{theorem-soundness-HL}, $\assproc{\phi}{p}{\psi}$ 
is true.
This means that, for all closed substitution instances 
$\assproc{\phi'}{p}{\psi'}$ of $\assproc{\phi}{p}{\psi}$,
$\phi' \gc p \evaleqv{V} (\phi' \gc p) \seqc (\psi' \gc \ep)$.
In other words, for all closed substitution instances 
$\assproc{\phi'}{p}{\psi'}$ of $\assproc{\phi}{p}{\psi}$,
for all $V$-evaluation maps~$\sigma$, 
$\eval{\sigma}(\phi' \gc p) = 
 \eval{\sigma}(\phi' \gc p) \seqc \eval{\sigma'}(\psi' \gc \ep)$
is derivable from the axioms of \deACPei.
Thus, the derivation of $\assproc{\phi}{p}{\psi}$ from the axioms and 
rules of \HL\ has made a collection of equations available that can be 
considered to be derived by equational reasoning from the axioms of 
\deACPei.

Let us have a closer look at the equivalence relation $\evaleqv{V}$ on 
$\ProcTerm_V$.
Clearly, this equivalence relation is useful when reasoning about 
processes in which data are involved.
However, it is plain from the proof of 
Theorem~\ref{theorem-congruence-evaleqv-parc}
that $\evaleqv{V}$ is not a congruence relation on $\ProcTerm_V$.
This complicates the use of equational reasoning to derive, among other
things, the collection of equations referred to above considerably.
The presented Hoare logic can be considered to be a means to get 
partially round the complications concerned.

Dissociated from its connection with \HL, $\evaleqv{V}$ remains an 
interesting equivalence relation on $\ProcTerm_V$ when it comes to 
reasoning about processes in which data is involved.
Therefore, we mention below a result on this equivalence relation which 
is a corollary of results from Section~\ref{sect-HL} used to prove the 
soundness of \HL.
The fact that $\evaleqv{V}$ is not a congruence relation on 
$\ProcTerm_V$, and consequently that $\evaleqv{V}$ is not preserved by 
all contexts, makes this corollary to the point.
In order to formulate the corollary, we first define a set of contexts,
using $\Box$ as a placeholder.

For each finite $V \subseteq \ProgVar$, the set $\sContext{V}$ of 
\emph{sequential evaluation supporting contexts for $V$} is the set 
$\Union_{W \subseteq V} \sContext{V,W}$, where the sets 
$\sContext{V,W}$, for finite $V,W \subseteq \ProgVar$ with 
$W \subseteq V$, are defined by simultaneous induction as follows:
\begin{itemize}
\item
$\Box \in \sContext{V,W}$;
\item
if $p \in \sProcTerm$, $C \in \sContext{V,W}$, 
$\FVar(p) \subseteq V$, and $\FAss(p) \subseteq W$, then 
$p \altc C,\; C \altc p,\linebreak[2]
 p \seqc C,\; C \seqc p,\; p \iter C,\; C \iter p \in \sContext{V,W}$;
\item
if $\phi \in \CondTerm$ and $C \in \sContext{V,W}$,
$\FVar(\phi) \subseteq V$, then $\phi \gc C \in \sContext{V,W}$;
\item
if $p \in \sProcTerm$, $C \in \sContext{V,W}$, 
$\FAss(p) \inter V = \emptyset$, and $\FVar(p) \inter W = \emptyset$, 
then 
$p \parc C,\;
 C \parc p \in \sContext{V \union \FVar(p),W \union \FAss(p)}$; 
\vspace*{0.25ex}
\item
if $H \subseteq \Act$ and $C \in \sContext{V,W}$, then 
$\encap{H}(C) \in \sContext{V,W}$.
\end{itemize}
We write $C[p]$, where $C \in \sContext{V}$ and $p \in \ProcTerm$, for
$C$ with the occurrence of $\Box$ replaced by $p$.

The following is a corollary of 
Theorems~\ref{theorem-congruence-evaleqv}
and~\ref{theorem-congruence-evaleqv-parc}.
\begin{corollary}
\label{corollary-contexts}
Let $V$ be a finite subset of $\ProgVar$.
Then, for all $p,p' \in \ProcTerm_V$, for all $C \in \sContext{V}$,
$p \evaleqv{V} p'$ only if $C[p] \evaleqv{V} C[p']$.
\end{corollary}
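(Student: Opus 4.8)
The plan is to prove the statement by induction on the structure of the context $C$, following the simultaneous inductive definition of the sets $\sContext{V,W}$. Because the clause for parallel composition enlarges both index sets (from $\sContext{V,W}$ to $\sContext{V \union \FVar(p),W \union \FAss(p)}$), a bare induction on $C \in \sContext{V}$ does not mesh with the fixed index of $\evaleqv{V}$, so I would prove a statement generalized over all finite $V,W \subseteq \ProgVar$ with $W \subseteq V$: for every $C \in \sContext{V,W}$ and all $p,p' \in \ProcTerm$ with $\FVar(p),\FVar(p') \subseteq V$, $\FAss(p),\FAss(p') \subseteq W$, and $p \evaleqv{V} p'$, it holds that $C[p] \evaleqv{V} C[p']$. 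Alongside this I would carry, by the same induction, the purely syntactic invariant that $\FVar(C[p]) \subseteq V$ and $\FAss(C[p]) \subseteq W$ whenever $\FVar(p) \subseteq V$ and $\FAss(p) \subseteq W$. The corollary is then the instance in which $W$ is the index supplied by a derivation witnessing $C \in \sContext{V}$.

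The base case $C = \Box$ is immediate, since $C[p] \equiv p$ and $C[p'] \equiv p'$. For the contexts $p \altc C$, $C \altc p$, $p \seqc C$, $C \seqc p$, $p \iter C$, $C \iter p$, $\phi \gc C$, and $\encap{H}(C)$, I would apply the induction hypothesis to the inner context and then close each case with the matching clause of Theorem~\ref{theorem-congruence-evaleqv}; the side conditions attached to these clauses in the definition of $\sContext{V,W}$ (namely $\FVar(p) \subseteq V$ and $\FVar(\phi) \subseteq V$) are exactly the hypotheses that theorem requires, and reflexivity of $\evaleqv{V}$ supplies the unchanged operand. None of these cases uses anything beyond Theorem~\ref{theorem-congruence-evaleqv} and the induction hypothesis.

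The parallel-composition clause is the crux. Here $C$ is $p_0 \parc C'$ or $C' \parc p_0$ with $C' \in \sContext{V_0,W_0}$, $V = V_0 \union \FVar(p_0)$ and $W = W_0 \union \FAss(p_0)$, under the side conditions $\FAss(p_0) \inter V_0 = \emptyset$ and $\FVar(p_0) \inter W_0 = \emptyset$. To invoke Theorem~\ref{theorem-congruence-evaleqv-parc} on $p_0 \parc C'[p]$ and $p_0 \parc C'[p']$ I must discharge its four disjointness hypotheses. The syntactic invariant applied to $C'$ gives $\FVar(C'[p]) \subseteq V_0$ and $\FAss(C'[p]) \subseteq W_0$ (and likewise for $p'$), whence $\FAss(p_0) \inter \FVar(C'[p]) \subseteq \FAss(p_0) \inter V_0 = \emptyset$ and $\FAss(C'[p]) \inter \FVar(p_0) \subseteq W_0 \inter \FVar(p_0) = \emptyset$; thus the two side conditions built into the definition of $\sContext{V,W}$ are precisely tailored to these obligations. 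The induction hypothesis, applied to $C'$ with the content regarded at the inner indices $(V_0,W_0)$, delivers $C'[p] \evaleqv{V_0} C'[p']$, which lifts to $\evaleqv{V}$ by the monotonicity of $\evaleqv{V}$ in $V$ noted after its definition ($V_0 \subseteq V$); together with $p_0 \evaleqv{V} p_0$ this lets Theorem~\ref{theorem-congruence-evaleqv-parc} apply at the enlarged index and yield $C[p] \evaleqv{V} C[p']$.

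I expect the main obstacle to be exactly this parallel case: the induction step changes the index set, so the delicate point is to thread the indices consistently---aligning the index at which $\evaleqv{}$ is taken with the growing pair $(V,W)$, matching the bounds on the hole content to the inner indices $(V_0,W_0)$, and maintaining the $\FVar$/$\FAss$ invariant so that the disjointness hypotheses of the Limited Congruence theorem can actually be met. The non-parallel cases are routine applications of Theorem~\ref{theorem-congruence-evaleqv}; as in the proofs of Theorems~\ref{theorem-congruence-evaleqv} and~\ref{theorem-congruence-evaleqv-parc}, whatever residual manipulations remain are head-normal-form arguments justified by Lemma~\ref{lemma-HNF}.
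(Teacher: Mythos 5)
Your overall strategy---induction on the derivation of $C \in \sContext{V,W}$, closing the non-parallel cases with Theorem~\ref{theorem-congruence-evaleqv} and the parallel case with Theorem~\ref{theorem-congruence-evaleqv-parc} while carrying a syntactic $\FVar$/$\FAss$ invariant---is exactly what the paper intends; the paper offers no proof beyond the remark that the result follows from those two theorems. However, there is a genuine gap in the parallel case, and it sits precisely at the point you yourself flag as delicate. Your generalized statement bounds the hole content by the \emph{outer} pair: $\FVar(p) \subseteq V$ and $\FAss(p) \subseteq W$ for $C \in \sContext{V,W}$. But in the case $C \equiv p_0 \parc C'$ with $C' \in \sContext{V_0,W_0}$ and $V = V_0 \union \FVar(p_0)$, $W = W_0 \union \FAss(p_0)$, both the syntactic invariant for $C'$ and the induction hypothesis require $\FVar(p) \subseteq V_0$ and $\FAss(p) \subseteq W_0$, and these do \emph{not} follow from $\FVar(p) \subseteq V_0 \union \FVar(p_0)$ and $\FAss(p) \subseteq W_0 \union \FAss(p_0)$. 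Your step ``the induction hypothesis, applied to $C'$ with the content regarded at the inner indices'' is therefore not licensed by your hypotheses, and the disjointness obligations $\FAss(p_0) \inter \FVar(C'[p]) = \emptyset$ and $\FAss(C'[p]) \inter \FVar(p_0) = \emptyset$ remain undischarged whenever $\FVar(p)$ meets $\FAss(p_0)$ or $\FAss(p)$ meets $\FVar(p_0)$.

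This is not a mere formality. Take $V = \set{v}$ and $C \equiv \ass{v}{1} \parc \Box$; the only derivation of $C \in \sContext{\set{v}}$ goes through $\Box \in \sContext{\emptyset,\emptyset}$ and puts $C$ in $\sContext{\set{v},\set{v}}$. Take $p \equiv a \seqc ((v = 0) \gc b)$ and $p' \equiv (v = 0) \gc (a \seqc b) \altc (\Lnot\, (v = 0)) \gc (a \seqc \dead)$. Then $p, p' \in \ProcTerm_{\set{v}}$, $\FAss(p) = \emptyset \subseteq W$, and $p \evaleqv{\set{v}} p'$, so your hypotheses are met; yet $C[p] \evaleqv{\set{v}} C[p']$ fails: for $\sigma(v) = 0$, after performing $a$ and then $\ass{v}{1}$ the evaluation of $C[p]$ is deadlocked while that of $C[p']$ can still perform $b$, so the two evaluated terms are not bisimilar and, by Theorem~\ref{theorem-soundness-ACPei}, not provably equal. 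The repair is to bound the hole content by the indices at the $\Box$ leaf of the derivation (the smallest pair occurring in it) rather than by the outer $(V,W)$; with that change your induction closes, the inner bounds propagate to every parallel node, and a small auxiliary fact ($p \evaleqv{V} p'$ iff $p \evaleqv{V_0} p'$ when $\FVar(p),\FVar(p') \subseteq V_0 \subseteq V$) lets you move between index levels. Be aware that the corollary as printed is itself loose on exactly this point, since it only requires $p,p' \in \ProcTerm_V$ for the outer $V$; any complete proof has to tighten that hypothesis in the way just described.
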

Of course, Corollary~\ref{corollary-contexts} can be applied to results
from using \HL.
Let $\assproc{\phi}{p}{\psi}$ be an asserted process,  
let $V = \FVar(\phi) \union \FVar(p) \union \FVar(\psi)$, and 
let $C \in \sContext{V}$.
Suppose that $\assproc{\phi}{p}{\psi}$ has been derived from the axioms 
and rules of \HL.
Then, for all closed substitution instances 
$\assproc{\phi'}{p}{\psi'}$ of $\assproc{\phi}{p}{\psi}$, we have that
$C[\phi' \gc p] \evaleqv{V} C[(\phi' \gc p) \seqc (\psi' \gc \ep)]$.

\section{On the Role of the Hoare Logic for \deACPei}
\label{sect-discussion}

Process algebras focus on the main role of a reactive system, namely 
maintaining some ongoing interaction with its environment.
Hoare logics focus on the main role of a transformational system, namely 
producing, without interruption by its environment, outputs from 
inputs.%
\footnote
{The terms reactive system and transformational system were coined
 in~\cite{HP85a}.}
However, actual systems are often reactive systems composed of reactive 
components and transformational components.
\deACPei\ provides a setting for equational reasoning about the 
behaviour of such systems, but it does not offer by itself the 
possibility to reason in Hoare-logic style about the behaviour of the 
transformational components.

Below, we will take the behaviour of a very simple transformational 
component and reason about how it changes data both in Hoare-logic style 
with the axioms and rules of \HL\ and equationally with the axioms of
\deACPei.
We assume that $\gD$ is the group of integers.
We also assume that $i$ and~$j$ are flexible variables from 
$\ProgVar$ and $n$ and $n'$ are variables of sort $\Data$.
Moreover, we use $e - e'$ as an abbreviation of $e + (-e')$.
The behaviour of the very simple transformational component concerned is 
described by the closed \deACPei\ term
$\ass{i}{i + j} \seqc \ass{j}{i - j} \seqc \ass{i}{i - j}$. 

We begin with showing by means of \HL\ that this behaviour swaps the 
values of $i$ and $j$.
We derive
\begin{ldispl}
\assproc{i = n \Land j = n'}{\ass{i}{i + j}}{i = n + n' \Land j = n'}
\end{ldispl}%
using the assignment axiom and the consequence rule. 
Similarly, we derive
\begin{ldispl}
 \assproc{i = n + n' \Land j = n'}{\ass{j}{i - j}}
  {i = n + n' \Land j = n}
\end{ldispl}%
and
\begin{ldispl}
\assproc{i = n + n' \Land j = n}{\ass{i}{i - j}}{i = n' \Land j = n}\;.
\end{ldispl}%
From these three asserted processes, we derive
\begin{ldispl}
 \assproc{i = n \Land j = n'}
  {\ass{i}{i + j} \seqc \ass{j}{i - j} \seqc \ass{i}{i - j}}
  {i = n' \Land j = n}
\end{ldispl}%
using the sequential composition rule twice.

We continue with showing the same by means of \deACPei.
This means that we have to derive from the axioms of \deACPei, for all
$e,e' \in \DataVal$, for all $\set{i,j}$-evaluation maps $\sigma$:
\begin{ldispl}
(*) \quad
\begin{array}[c]{@{}l@{}}
 \eval{\sigma}
  ((i = e \Land j = e') \gc
   \ass{i}{i + j} \seqc \ass{j}{i - j} \seqc \ass{i}{i - j}) \\ 
\; {} =
 \eval{\sigma}
  ((i = e \Land j = e') \\
\phantom{\; {} = \eval{\sigma}(}
\; {} \gc
   \ass{i}{i + j} \seqc \ass{j}{i - j} \seqc \ass{i}{i - j} \seqc
   (i = e' \Land j = e) \gc \ep)\;.
\end{array}
\end{ldispl}%
We derive 
\begin{ldispl}
 \eval{\sigma}((i = e \Land j = e') \gc \ass{i}{i + j}) = 
 \sigma(i = e \Land j = e') \gc \ass{i}{\sigma(i + j)}
\end{ldispl}%
using axioms V3 and V5; and 
\begin{ldispl}
 \eval{\sigma}
  ((i = e \Land j = e') \gc \ass{i}{i + j} \seqc
   (i = e + e' \Land j = e') \gc \ep) \\
\; {} =
 \sigma(i = e \Land j = e') \\
\phantom{\; {} = {}}
\; {} \gc
 \ass{i}{\sigma(i + j)} \seqc
 \sigma\mapupd{\sigma(e + e')}{i}(i = e + e' \Land j = e') \gc \ep
\end{ldispl}%
using axioms V0, V3, and V5.
\pagebreak[2]
\\
We can derive the following equation for all $\set{i,j}$-evaluation maps 
$\sigma$:
\begin{ldispl}
(**) \quad
\begin{array}[c]{{@{}l@{}}}
 \sigma(i = e \Land j = e') \gc \ass{i}{\sigma(i + j)} \\
\; {} =
 \sigma(i = e \Land j = e') \\
\phantom{\; {} = {}}
\; {} \gc
 \ass{i}{\sigma(i + j)} \seqc
 \sigma\mapupd{\sigma(e + e')}{i}(i = e + e' \Land j = e') \gc \ep\;. 
\end{array}
\end{ldispl}%
In the case $\sigma(i) = e$ and $\sigma(j) = e'$, we derive
$\sigma\mapupd{\sigma(e + e')}{i}(i = e + e' \Land j = e') = \True$
using IMP2.
From this, we derive equation~(**) using axioms GC1 and A8.
\\
In the case $\sigma(i) \neq e$ or $\sigma(j) \neq e'$, we derive
$\sigma\mapupd{\sigma(e + e')}{i}(i = e + e' \Land j = e') = \False$
using IMP2.
From this, we derive equation~(**) using axiom GC2.
\\
Hence, we have for all $\set{i,j}$-evaluation maps $\sigma$:
\begin{ldispl}
 \eval{\sigma}((i = e \Land j = e') \gc \ass{i}{i + j}) \\
\; {} = 
 \eval{\sigma}
  ((i = e \Land j = e') \gc \ass{i}{i + j} \seqc
   (i = e + e' \Land j = e') \gc \ep)\;.
\end{ldispl}%
Similarly, we find  for all $\set{i,j}$-evaluation maps $\sigma$:
\begin{ldispl}
 \eval{\sigma}((i = e + e' \Land j = e') \gc \ass{j}{i - j}) \\
\; {} = 
 \eval{\sigma}
  ((i = e + e' \Land j = e') \gc \ass{j}{i - j} \seqc
   (i = e + e' \Land j = e) \gc \ep)
\end{ldispl}%
and
\begin{ldispl}
 \eval{\sigma}((i = e + e' \Land j = e) \gc \ass{i}{i - j}) \\
\; {} = 
 \eval{\sigma}
  ((i = e + e' \Land j = e) \gc \ass{i}{i - j} \seqc
   (i = e' \Land j = e) \gc \ep)\;.
\end{ldispl}%
From the last three equations, we derive equation~(*) using axioms A5, 
A9, GC5, V3, and V5.
By this we have finally shown by means of \deACPei\ that the values of 
$i$ and $j$ are swapped by the process described by
$\ass{i}{i + j} \seqc \ass{j}{i - j} \seqc \ass{i}{i - j}$.

In this case, it is clear that Hoare-logic style reasoning with the 
axioms and rules of \HL\ is much more convenient than equational 
reasoning with the axioms of \deACPei.
Because a single application of a rule of \HL\ cannot be justified by a 
single application of an axiom of \deACPei, we expect that this also 
holds for virtually all other cases of reasoning about how the behaviour 
of a transformational system changes data.

Now, we turn our attention to the rather restrictive side condition of 
the parallel composition rule of \HL.
As mentioned before at the end of Section~\ref{sect-HL}, we have that
the asserted process
\begin{ldispl}
 \assproc{i = 0}{\ass{i}{i+1} \seqc \ass{i}{i+1} \parc \ass{i}{0}}
  {i = 0 \Lor i = 1 \Lor i = 2} 
\end{ldispl}%
is true, but this cannot be derived by means of the axioms and rules of
\HL\ alone because a premise of the form 
$\assproc{\phi}{\ass{i}{i+1} \seqc \ass{i}{i+1}}{\psi}$ and a premise of
the form $\assproc{\phi'}{\ass{i}{0}}{\psi'}$ are never disjoint.
However, we can derive the following equation from the axioms of 
\deACPei: 
\begin{ldispl}
 \ass{i}{i+1} \seqc \ass{i}{i+1} \parc \ass{i}{0} \\ 
\; {} =
 \ass{i}{i+1} \seqc 
 (\ass{i}{i+1} \seqc \ass{i}{0} \altc \ass{i}{0} \seqc \ass{i}{i+1}) \\
\phantom{\; {} = {}}
\; {} \altc
 \ass{i}{0} \seqc \ass{i}{i+1} \seqc \ass{i}{i+1}\;.
\end{ldispl}%
By Corollary~\ref{corollary-soundness}, it is sound to replace in the 
above asserted process the left-hand side of this equation by the 
right-hand side of this equation. 
This yields the asserted process 
\begin{ldispl}
\{i = 0\} \\
\{\ass{i}{i+1} \seqc 
  (\ass{i}{i+1} \seqc \ass{i}{0} \altc \ass{i}{0} \seqc \ass{i}{i+1}) \\
\; \; {} \altc
  \ass{i}{0} \seqc \ass{i}{i+1} \seqc \ass{i}{i+1}\} \\
\{i = 0 \Lor i = 1 \Lor i = 2\}\;,
\end{ldispl}%
which can be derived using the assignment axiom, the alternative 
composition rule, the sequential composition rule, and the consequence 
rule of \HL\ several times.

If the disjointness side condition of the parallel composition rule of 
\HL\ is replaced by an interference-freedom side condition, like
in~\cite{OG76a}, then the original asserted process becomes derivable 
using the axioms and rules of the Hoare logic alone 
(see e.g.~\cite[page~278]{ABO09a}).
The interference-freedom proof involved needs proof outlines
(see~\cite{OG76a}) for 
$\assproc{i = 0}{\ass{i}{i+1} \seqc \ass{i}{i+1}}{\True}$ and
$\assproc{\True}{\ass{i}{0}}{i = 0 \Lor i = 1 \Lor i = 2}$. 
In this very simple case, the interference-freedom proof already amounts 
to seven interference-freedom checks.
However, for two processes in which $k$ and $k'$ assignment actions 
occur, the number of interference-freedom checks is at least
$2 \mul k \mul k' + k + k'$.
Therefore, we expect that interference-freedom proofs partly outweigh 
the advantage of using a Hoare logic.

\section{Related Work}
\label{sect-related}

The approach to the formal verification of programs that is now known as
Hoare logic was proposed in~\cite{Hoa69a}.
The illustration of this approach was at the time confined to the very 
simple deterministic sequential programs that are mostly referred to as 
while programs (cf.~\cite{ABO09a}).
The axioms, the sequential composition rule, the iteration rule, the
guarded command rule, and the consequence rule from our Hoare logic 
savour strongly of the common rules for while programs.
The alternative composition rule is the or rule due to~\cite{Lau71a}, 
the parallel composition rule was proposed in~\cite{Hoa72a}, and the 
auxiliary variables rule was first introduced in~\cite{OG76a}.   
The parallel composition rules proposed in~\cite{AFR80a,LG81a,OG76a} are
more complicated than our parallel composition rule.

In the case of~\cite{AFR80a,LG81a}, the intention was to provide a Hoare 
logic for the first design of CSP~\cite{Hoa78}.
In that design, one program may force another program to assign a data 
value sent by the former program to a program variable used by the 
latter program.
This feature complicates the parallel composition rule considerably.
Moreover, incorporating this feature in an ACP-like process algebra 
would lead to the situation that, in equational reasoning, certain 
axioms may not be applied in contexts of parallel processes (like
in~\cite{GP94b}, see below).
Because our concern is in the use of a Hoare logic as a complement to 
pure equational reasoning, we have not considered incorporating this 
feature.
 
In the case of~\cite{OG76a}, the rule is more complicated because, in 
the parallel programs covered, program variables may be shared 
variables, i.e.\ program variables that are assigned to in one program 
may be used in another program.
Our process algebra also covers shared variables.
However, covering shared variables in our Hoare logic as well would mean 
that the simple disjointness proof required by our parallel composition 
rule has to be replaced a sophisticated interference-freedom proof.
We believe that this would diminish the usefulness of our Hoare logic as
a complement to equational reasoning considerably.
Therefore, we have not considered covering shared variables in the
parallel composition rule.

In~\cite{GP94b}, an extension of ACP with the empty process constant and 
the unary counterpart of the binary guarded command operator is 
presented, the truth of an asserted sequential process is defined in 
terms of the transition relations from the given structural operational 
semantics of the presented extension of ACP, and it is shown that an 
asserted sequential process $\assproc{\phi}{p}{\psi}$ is true according 
to that definition iff 
$\guard{\phi} \seqc p \bisim' \guard{\phi} \seqc p \seqc \guard{\psi}$, 
where $\bisim'$ is bisimulation equivalence as defined in~\cite{GP94b} 
for sequential processes.
Moreover, a Hoare logic of sequential asserted processes is presented 
and its soundness is shown. 
However, \cite{GP94b} does not go into the use of that Hoare logic as a
complement to pure equational reasoning from the equational axioms. 

Regarding the bisimulation equivalence $\bisim'$ defined in~\cite{GP94b} 
for sequential processes, we can mention that, if the data-states are
evaluation maps, $p \bisim' q$ iff 
$\eval{\sigma}(p) \bisim \eval{\sigma}(q)$ for all $V$-evaluation maps 
$\sigma$, where $V = \FVar(p) \union \FVar(q)$.
Due to the possibility of interference between parallel processes, a 
different bisimulation equivalence $\bisim''$, finer than $\bisim'$, is 
needed in~\cite{GP94b} for parallel processes.
As a consequence, in equational reasoning, certain axioms may not be 
applied in contexts of parallel processes.
Moreover, $\bisim$ together with the operators $\eval{\sigma}$ allows of 
dealing with local data-states, whereas the combination of $\bisim'$ and 
$\bisim''$ does not allow of dealing with local data-states. 

\section{Concluding Remarks}
\label{sect-conclusions}

We have taken an extension of \ACP\ with features that are relevant to 
processes in which data are involved, devised a Hoare logic of asserted
processes based on this extension of \ACP, and gone into the use of this 
Hoare logic as a complement to pure equational reasoning from the axioms 
of the extension of \ACP.

We have defined what it means that an asserted process is true in terms 
of an equivalence relation ($\evaleqv{V}$) that had been found to be 
central to relating the extension of \ACP\ and the Hoare logic. 
That this equivalence relation is not a congruence relation with respect 
to parallel composition is related to the fact that in the extension of 
\ACP\ presented in~\cite{GP94b} certain axioms may not be applied in 
contexts of parallel processes.

In this paper, we build on earlier work on ACP. 
The axioms of \ACPe\ have been taken from~\cite[Section 4.4]{BW90}, the 
axioms for the iteration operator have been taken from~\cite{BFP01a}, 
and the axioms for the guarded command operator have been taken 
from~\cite{BB92c}.
The  evaluation operators have been inspired by~\cite{BM05a} and the data 
parameterized action operator has been inspired by~\cite{BM09d}.

\bibliographystyle{splncs03}
\bibliography{PA}

\begin{thebibliography}{10}
\providecommand{\url}[1]{\texttt{#1}}
\providecommand{\urlprefix}{URL }

\bibitem{ABO09a}
Apt, K.R., de~Boer, F.S., Olderog, E.R.: Verification of Sequential and
  Concurrent Programs. Texts in Computer Science, Springer-Verlag, Berlin,
  third edn. (2009)

\bibitem{AFR80a}
Apt, K.R., Francez, N., de~Roever, W.P.: A proof system for communicating
  sequential processes. ACM Transactions on Programming Languages and Systems
  2(3),  359--385 (1980)

\bibitem{BB88}
Baeten, J.C.M., Bergstra, J.A.: Global renaming operators in concrete process
  algebra. Information and Control  78(3),  205--245 (1988)

\bibitem{BB92c}
Baeten, J.C.M., Bergstra, J.A.: Process algebra with signals and conditions.
  In: Broy, M. (ed.) Programming and Mathematical Methods. NATO ASI Series,
  vol. F88, pp. 273--323. Springer-Verlag (1992)

\bibitem{BB94b}
Baeten, J.C.M., Bergstra, J.A.: Process algebra with propositional signals.
  Theoretical Computer Science  177,  381--405 (1997)

\bibitem{BV93a}
Baeten, J.C.M., Verhoef, C.: A congruence theorem for structured operational
  semantics with predicates. In: Best, E. (ed.) CONCUR'93. Lecture Notes in
  Computer Science, vol. 715, pp. 477--492. Springer-Verlag (1993)

\bibitem{BW90}
Baeten, J.C.M., Weijland, W.P.: Process Algebra, Cambridge Tracts in
  Theoretical Computer Science, vol.~18. Cambridge University Press, Cambridge
  (1990)

\bibitem{BBP94a}
Bergstra, J.A., Bethke, I., Ponse, A.: Process algebra with iteration and
  nesting. Computer Journal  37,  243--258 (1994)

\bibitem{BFP01a}
Bergstra, J.A., Fokkink, W.J., Ponse, A.: Process algebra with recursive
  operations. In: Bergstra, J.A., Ponse, A., Smolka, S.A. (eds.) Handbook of
  Process Algebra, pp. 333--389. Elsevier, Amsterdam (2001)

\bibitem{BK84b}
Bergstra, J.A., Klop, J.W.: Process algebra for synchronous communication.
  Information and Control  60(1--3),  109--137 (1984)

\bibitem{BM05a}
Bergstra, J.A., Middelburg, C.A.: Splitting bisimulations and retrospective
  conditions. Information and Computation  204(7),  1083--1138 (2006)

\bibitem{BM09d}
Bergstra, J.A., Middelburg, C.A.: A process calculus with finitary comprehended
  terms. Theory of Computing Systems  53(4),  645--668 (2013)

\bibitem{Coo78a}
Cook, S.A.: Soundness and completeness of an axiom system for program
  verification. SIAM Journal of Computing  7(1),  70--90 (1978)

\bibitem{Fok96a}
Fokkink, W.J.: On the completeness of the equations for the kleene star in
  bisimulation. In: Wirsing, M., Nivat, M. (eds.) AMAST~96. Lecture Notes in
  Computer Science, vol. 1101, pp. 180--194. Springer-Verlag (1996)

\bibitem{GP94b}
Groote, J.F., Ponse, A.: Process algebra with guards: Combining {Hoare} logic
  with process algebra. Formal Aspects of Computing  6(2),  115--164 (1994)

\bibitem{HP85a}
Harel, D., Pnueli, A.: On the development of reactive systems. In: Apt, K.
  (ed.) Logics and Models of Concurrent Systems. NATO ASI Series, vol. F13, pp.
  477--498. Springer-Verlag (1985)

\bibitem{Hoa69a}
Hoare, C.A.R.: An axiomatic basis for computer programming. Communications of
  the ACM  12(10),  576--580, 583 (1969)

\bibitem{Hoa72a}
Hoare, C.A.R.: Towards a theory of parallel programming. In: Hoare, C.A.R.,
  Perrott, R.H. (eds.) Operating Systems Techniques. pp. 61--71. Academic Press
  (1972)

\bibitem{Hoa78}
Hoare, C.A.R.: Communicating sequential processes. Communications of the ACM
  21(8),  666--677 (1978)

\bibitem{Lam94a}
Lamport, L.: The temporal logic of actions. ACM Transactions on Programming
  Languages and Systems  16(3),  872--923 (1994)

\bibitem{Lau71a}
Lauer, P.E.: Consistent formal theories of the semantics of programming
  languages. Technical Report 25.121, IBM Laboratory Vienna (1971)

\bibitem{LG81a}
Levin, G.M., Gries, D.: A proof technique for communicating sequential
  processes. Acta Informatica  15(3),  281--302 (1981)

\bibitem{Mil84a}
Milner, R.: A complete inference system for a class of regular behaviours.
  Journal of Computer and System Sciences  28(3),  439--466 (1984)

\bibitem{OG76a}
Owicki, S., Gries, D.: An axiomatic proof technique for parallel programs {I}.
  Acta Informatica  6(4),  319--340 (1976)

\bibitem{PS95a}
Pigozzi, D., Salibra, A.: The abstract variable-binding calculus. Studia Logica
   55(1),  129--179 (1995)

\bibitem{ST99a}
Sannella, D., Tarlecki, A.: Algebraic preliminaries. In: Astesiano, E.,
  Kreowski, H.J., Krieg-Br{\"{u}}ckner, B. (eds.) Algebraic Foundations of
  Systems Specification, pp. 13--30. Springer-Verlag, Berlin (1999)

\bibitem{Sch97a}
Schneider, F.B.: On Concurrent Programming. Graduate Texts in Computer Science,
  Springer-Verlag, Berlin (1997)

\bibitem{Sew94a}
Sewell, P.: Bisimulation is not finitely (first order) equationally
  axiomatisable. In: LICS'94. pp. 62--70. IEEE Computer Society Press (1994)

\bibitem{Wir90a}
Wirsing, M.: Algebraic specification. In: van Leeuwen, J. (ed.) Handbook of
  Theoretical Computer Science, vol.~B, pp. 675--788. Elsevier, Amsterdam
  (1990)

\end{thebibliography}

\end{document}